\def\@seccntformat#1{\@ifundefined{#1@cntformat}%
   {\csname the#1\endcsname\quad}  
   {\csname #1@cntformat\endcsname}
}
\let\oldappendix\appendix 
\renewcommand\appendix{%
    \oldappendix
    \newcommand{\section@cntformat}{\appendixname~\thesection\quad}
}
\newcommand{\keywords}[1]{\par\addvspace\baselineskip
\noindent\keywordname\enspace\ignorespaces#1}
\newtheorem{notation}{Notation}
\newtheorem{convention}{Convention}
\newtheorem{assumption}{Assumption}
\newcommand{\eqed}[0]{\tag*{\qed}}
\newcommand{\refsec}[1]{Section \ref{sec:#1}}
\newcommand{\labsec}[1]{\label{sec:#1}}
\newcommand{\reffig}[1]{Figure \ref{fig:#1}}
\newcommand{\labfig}[1]{\label{fig:#1}}
\newcommand{\refrem}[1]{Remark \ref{rem:#1}}
\newcommand{\labrem}[1]{\label{rem:#1}}
\newcommand{\labasm}[1]{\label{asm:#1}}
\newcommand{\refthm}[1]{Theorem \ref{thm:#1}}
\newcommand{\labthm}[1]{\label{thm:#1}}
\newcommand{\reflem}[1]{Lemma \ref{lem:#1}}
\newcommand{\lablem}[1]{\label{lem:#1}}
\newcommand{\nega}[0]{\ensuremath{{\text -}}}
\newcommand{\earg}[1]{\,\,#1}
\newcommand{\iarg}[1]{\ensuremath{\,\nega #1}}
\newcommand{\targ}[1]{\ensuremath{\cdot #1}}
\newcommand{\subtp}[0]{\leq}
\newcommand{\cdle}[0]{\ensuremath{\iota \lambda P2}~}
\newcommand{\erase}[1]{\ensuremath{\lvert #1 \rvert}}
\newcommand{\Erase}[1]{\ensuremath{\Big\lvert #1 \Big\rvert}}
\newcommand{\rewrite}[2]{\ensuremath{\rho ~ #1 ~\nega~ #2}}
\newcommand{\pair}[2]{\ensuremath{[ #1 ~,~ #2]}}
\newcommand{\refl}[1]{\ensuremath{\beta\{#1\}}}
\newcommand{\fun}[1]{\lambda #1 .~}
\newcommand{\all}[1]{\Lambda #1 .~}
\newcommand{\by}[1]{\text{#1}}
\newcommand{\name}[1]{\textrm{#1}}
\begin{document}

\mainmatter
\title{Zero-Cost Coercions \\for Program and Proof Reuse}
\author{Larry Diehl and Aaron Stump}
\institute{University of Iowa\\
\email{\{larry-diehl,aaron-stump\}@uiowa.edu}}
\maketitle

\begin{abstract}
  We introduce the notion of \textit{identity coercions} between
  non-indexed and indexed variants of inductive datatypes, such as lists and
  vectors. An identity coercion translates one type to another such that
  the coercion function definitionally reduces to the identity
  function. This allows us to reuse vector programs to derive list
  programs (and vice versa), without any runtime cost. This also
  allows us to reuse vector proofs to derive list proofs (and vice
  versa), without the cost of equational reasoning proof obligations.
  Our work is formalized in Cedille, a dependently typed programming
  language based on a type-annotated Curry-style type theory with
  implicit (or, erased) products (or, dependent functions), and relies
  crucially on \textit{erasure} to introduce definitional
  equalities between underlying untyped terms.

\keywords{
Dependent types; coercion; reuse; implicit products; cedille.
}

\end{abstract}

\section{Introduction}
\labsec{intro}

In dependently typed languages
(such as Agda~\cite{lang:agda}, Coq~\cite{lang:coq},
Idris~\cite{lang:idris}, or Lean~\cite{lang:lean})
it is common to define traditional algebraic datatypes,
as well as more refined \textit{indexed} variants
of algebraic datatypes, where the values of the indexed type are a
restriction of the values of the original algebraic type to particular indices.
An example of two such datatypes are lists and vectors, vectors being
lists indexed by their length.

To prevent code duplication, a programmer may want to define a
function over lists by reusing a function over vectors (or vice versa),
which we refer to as \textit{program reuse}.
For example, we can derive list append (\texttt{appendL}) by reusing
vector append (\texttt{appendV}) as follows:

\begin{verbatim}
appendL : ∀ A : ★ . List A ➔ List A ➔ List A
appendL = λ xs ys . v2l (appendV (l2v xs) (l2v ys))
\end{verbatim}

This is achieved by coercing the list arguments to vectors
(via \texttt{l2v}), passing
them to the reused function \texttt{appendV}, and coercing
the resulting vector to a list (via \texttt{v2l}).
Unfortunately, this has the drawback of linear-time coercions back and
forth between lists and vectors (via \texttt{l2v} and \texttt{v2l})
when we \textit{run} our code.

A programmer may also want to prevent code duplication by defining a
proof of a property about list functions (defined by reuse) in terms of a proof of a
property about vector functions (or vice versa), which we refer to as
\textit{proof reuse}. For example, we may want to derive associativity
of list append (\texttt{appendAssocL}) in terms of associativity
of vector append (\texttt{appendAssocV}) as follows:

\begin{verbatim}
appendAssocL : ∀ A : ★ . Π xs ys zs : List A .
  appendL (appendL xs ys) zs ≃ appendL xs (appendL ys zs)  
appendAssocL = λ xs ys zs . cong v2l 
  (appendAssocV (l2v xs) (l2v ys) (l2v zs))
\end{verbatim}

Unfortunately, reusing the proof of \texttt{appendAssocV} by casting
our arguments to lists (via \texttt{v2l}),
and by congruence (\texttt{cong}) applied to the
cast back to vectors (via \texttt{l2v}), is not
enough to get the proof above to type check. We must additionally
perform equational reasoning, by appealing to the identity laws
established by an isomorphism between lists and vectors. In other
words, the proof would need to rewrite occurrences of
(\texttt{(v2l ∘ l2v) xs}) and (\texttt{(l2v ∘ v2l) xs})
to \texttt{xs} in the appropriate places, which may only appear after
previous rewrites and $\beta$-reductions.

We show that in a type-annotated implementation of a Curry-style type
theory, coercions that definitionally reduce to the identity function
($\lambda x. x$)
are \textit{derivable}, and we call them ``\textit{identity coercions}''.
Identity coercions enable \textit{zero-cost program reuse}, avoiding
runtime overhead, and \textit{zero-cost proof reuse}, avoiding
equational reasoning overhead
(making \texttt{appendAssocL} above well-typed).

\subsection{The Setting}

In a Curry-style type theory with implicit
products (such as ICC~\cite{miquel:implicit}),
an untyped Church-encoded vector can be assigned
the vector type (\texttt{Vec}),
but also the list type (\texttt{List}).
This is possible because the types share the same class of untyped
terms, and because vectors are a subtype of lists in ICC
($\name{Vec} \earg A \earg n \subtp \name{List} \earg A$).

A type-annotated version of a Curry-style
calculus with implicit products
(such as ICC*~\cite{barras:implicit} and
\cdle\cite{stump17b}) adds typing information to terms, but compares
\textit{erased} terms (removing type annotations, implicit type
applications, etc.) during conversion
($\erase{t} =_{\alpha\beta\eta} \erase{t'}$).
The extra type annotations on terms allows them
to be algorithmically type checked, making type-annotated versions of
Curry-style calculi suitable as the basis of programming languages.

This paper is formalized in Cedille, a dependently typed programming
language based on \cdle.\footnote{\raggedright{A pre-release of Cedille for
evaluating our formalization is available here:\\
\url{http://cs.uiowa.edu/~astump/cedille-prerelease.zip}}\\
  The Cedille code accompanying this paper is here:
  \url{https://github.com/larrytheliquid/zero-cost-coercions}}
In a type-annotated setting, a vector cannot be used in
the place of a list, as they have distinct types, despite the fact that
their \textit{erased} untyped values are equal.
Nonetheless, Barras and Bernardo~\cite{barras:implicit} demonstrate
(in ICC*) that it is possible to write an \textit{identity coercion} from
Church-encoded vectors to Church-encoded lists, which can be thought
of as a checkable term witness of the subtyping relationship:
$\name{Vec} \earg A \earg n \subtp \name{List} \earg
A$.\footnote{Barras and Bernardo did not name their technique, which
  we are calling ``identity coercion''.
  }

\subsection{Contributions}

In \cdle, Stump~\cite{stump17b} adds a
dependent intersection type~\cite{kopylov03}
and a heterogeneous equality type~\cite{mcbride00}
to a type-annotated Curry-style calculus
with implicit products, allowing \textit{inductive types} (i.e. those
supporting an induction principle) to be derived,
but whose erased terms are untyped Church-encodings.
Working in Cedille (based on \cdle), our contributions are:
\begin{enumerate}
\item{\textbf{Extending}}
  the \textit{non-dependent} identity coercion from \textit{Church-encoded}
  vectors to lists, to an identity coercion from \textit{inductive}
  vectors to lists (\texttt{v2l} in \refsec{vecreuse:v2l}). By working
  with inductive types, we can write proofs by induction but still
  support identity coercion.
\item{\textbf{Introducing}}
  the \textit{dependent} identity coercion from inductive
  lists to vectors (\texttt{l2v} in \refsec{vecreuse:l2v}). This is a witness of the
  dependent subtyping relationship:
  $(xs : \name{List}  \earg A) \subtp \name{Vec} \earg A \earg (\name{length} \earg xs)$.
  Because the length of the output vector depends on the input list, the dependent
  identity coercion \texttt{l2v} cannot be written using Church-encoded
  datatypes, which do not have
  induction principles~\cite{geuvers01}.
\item{\textbf{Introducing}} the identity coercion from inductive
  vectors to length-{\linebreak}constrained lists (\texttt{v2u} in
  \refsec{listreuse:v2u}). This allows vectors to be coerced to lists,
  while ``remembering'' the constraint that the length of the output
  list should be the length of the input vector index.
\item{\textbf{Introducing}}
  a functorial \textit{map} for inductive lists
  (\texttt{mapL} in \refsec{nestreuse:mapl}), whose
  \textit{partial application} to an identity coercion results in an
  identity coercion.
\end{enumerate}

After reviewing how to derive inductive datatypes in Cedille
(\refsec{inductive}),
we show how to reuse a vector program
(\texttt{appendV}) and proof (\texttt{appendAssocV})
to define a list program (\texttt{appendL}) and proof
(\texttt{appendAssocL}) in \refsec{vecreuse},
show how to reuse a list program
(\texttt{appendL}) and proof (\texttt{appendAssocL})
to define a vector program (\texttt{appendV}) and proof
(\texttt{appendAssocV}) in \refsec{listreuse},
show how to reuse a \textit{nested} list program
(\texttt{concatL}) and proof (\texttt{concatDistAppendL})
to define a nested vector program (\texttt{concatV})
and proof (\texttt{concatDistAppendV})
in \refsec{nestreuse},
discuss related work in \refsec{related},
and conclude in \refsec{conc}.
We reiterate that all of our instances of program and proof reuse
are \textit{zero-cost}, as they are implemented in terms of
identity coercions.

\begin{remark}
  All lemmas and theorems in this paper are trivial consequences of
  definitional equality ($\erase{t} =_{\alpha\beta\eta} \erase{t'}$).
  Nonetheless, we prove them by hand to aid the reader in
  understanding why terms erase the way they do, and in particular how
  our carefully crafted identity coercions indeed erase to the
  identity function (up to $\alpha\beta\eta$-equality).
\end{remark}

\section{Background: Deriving Inductive Types}
\labsec{inductive}

In this section we review how to derive inductive types in
Cedille~\cite{stump17b}, whose erased terms are untyped
Church-encodings.
An inductive datatype is defined as the dependent intersection
of 3 components:
\begin{enumerate}
\item The \textit{Church-encoding} of the datatype.
\item The \textit{unary parametricity theorem} of the Church-encoding.
\item The \textit{reflection theorem} of the Church-encoding.
\end{enumerate}

\subsection{Church-Encoding}
\labsec{ind:church}

The first component (\texttt{VecC}) is the Church-encoded vector type,
where the impredicatively quantified
\texttt{X} is a family of types indexed by the natural numbers.

\begin{convention}
We include ``\texttt{C}'' in
the suffix of an identifier to indicate that it relates to
a Church-encoded datatype.
\end{convention}

\begin{verbatim}
VecC ◂ ★ ➔ Nat ➔ ★ = λ A : ★ . λ n : Nat . 
  ∀ X : Nat ➔ ★ .
  X zero ➔
  (∀ n : Nat . A ➔ X n ➔ X (suc n)) ➔
  X n .
\end{verbatim}

Implicit products are introduced by the $\forall$ quantifier, and
represent erased dependent function arguments.
Implicit products can be used for type arguments (e.g. the type family
\texttt{X}), but also for value arguments
(e.g. \texttt{n} in the cons case, used for indexing).

\paragraph{Constructors}

Next, we define the Church-encoded vector constructors \texttt{nilCV}
and \texttt{consCV}.

\begin{convention}
We suffix an identifier with ``\texttt{V}''
to indicate that it relates to vectors.
\end{convention}

\begin{verbatim}
nilCV ◂ ∀ A : ★ . VecC · A zero =
  Λ A . Λ X . λ cN . λ cC . cN .

consCV ◂ ∀ A : ★ . ∀ n : Nat . A ➔ VecC · A n ➔ VecC · A (suc n) =
  Λ A . Λ n . λ x . λ xs . Λ X . λ cN . λ cC . 
  cC -n x (xs · X cN cC) .
\end{verbatim}

\begin{remark}
A full definition of term erasure for our base theory \cdle can be
found in Figure 6 of Stump'17~\cite{stump17b}.
\end{remark}

All of the implicit abstractions ($\Lambda$) are erased, as are
implicit applications (prefixed by minus, e.g. \texttt{-n}),
and type applications
(prefixed by a centered dot, e.g. \texttt{· X}).\footnote{While
  $\forall$ and $\Lambda$ quantify over and
  introduce (respectively) both type and value arguments,
  minus ($\nega$) is special syntax for implicit (value)
  applications, while center dot ($\cdot$) is special syntax for type
  applications.
  }
We can
verify that erasing the type-annotated \texttt{nilCV} and
\texttt{consCV} results in the untyped Church-encodings of nil and
cons, respectively:

\begin{lemma}
\erase{\name{nilCV}} is the Church-encoding of nil.
\lablem{nilcv}
\end{lemma}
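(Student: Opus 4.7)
The plan is to compute $\erase{\name{nilCV}}$ directly by applying the erasure rules of \cdle (referenced in the remark just before the lemma, Figure 6 of Stump'17), and then to observe that the resulting untyped term matches the standard Church-encoding of nil. Since the lemma is effectively a definitional unfolding, the proof will be a short calculation with no real mathematical content; the only thing to take care of is being explicit about which binders and applications get erased.

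First I would recall the key erasure rules relevant here: implicit abstractions $\Lambda x.\, t$ erase to $\erase{t}$ (both type and value implicit abstractions), implicit applications $t \iarg{u}$ erase to $\erase{t}$, and type applications $t \targ{T}$ also erase to $\erase{t}$. Regular $\lambda$-abstractions and regular applications, by contrast, are preserved (with erasure propagated into subterms). Starting from the defining equation
\[
\name{nilCV} \;=\; \Lambda A.\, \Lambda X.\, \lambda cN.\, \lambda cC.\, cN,
\]
I would peel off the two implicit abstractions $\Lambda A$ and $\Lambda X$, which both disappear under erasure, and then observe that the two explicit abstractions $\lambda cN$ and $\lambda cC$ and the variable occurrence $cN$ all survive unchanged. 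Thus $\erase{\name{nilCV}} = \lambda cN.\, \lambda cC.\, cN$.

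Finally, I would note that $\lambda cN.\, \lambda cC.\, cN$ is precisely the standard untyped Church-encoding of the nil constructor (a function that, given a nil case and a cons case, returns the nil case), completing the proof. The only potential obstacle is notational: making sure the reader sees exactly which of the binders in the source definition are implicit (and hence erased) versus explicit (and hence retained), so I would explicitly annotate each step of the erasure rather than collapsing them into a single equation.
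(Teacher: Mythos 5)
Your proposal is correct and follows exactly the paper's own argument: unfold the definition and erase the two implicit abstractions $\Lambda A$ and $\Lambda X$, leaving $\lambda c_n.\,\lambda c_c.\,c_n$, the untyped Church-encoding of nil. Your extra care in spelling out which erasure rules apply (implicit abstraction, implicit application, type application) is a harmless elaboration of the paper's one-line justification.
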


\begin{proof}
{\small
\abovedisplayskip=-\baselineskip
\begin{align*}
  &=_\delta \erase{
    \all{A, X} \fun{c_n, c_c} c_n
  }
  && \by{Erase implicit abstractions.}
  \\
  &=~ \fun{c_n, c_c} c_n
  \eqed
\end{align*}}
\end{proof}

\begin{lemma}
\erase{\name{consCV}} is the Church-encoding of cons.
\lablem{conscv}
\end{lemma}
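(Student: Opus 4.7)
The plan is to mirror the proof of Lemma \ref{lem:nilcv} exactly, just with more syntactic material to erase. I would start by $\delta$-unfolding the definition of \name{consCV} to expose its body
$\all{A, n} \fun{x, xs} \all{X} \fun{c_n, c_c} c_c \iarg{n}\, x\, (xs \cdot X\, c_n\, c_c)$,
then apply the erasure rules from Figure 6 of Stump'17 (as noted in the remark preceding the lemma) in a single step or two.

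The key observation is which syntactic forms disappear under erasure: the three implicit abstractions $\Lambda A$, $\Lambda n$, and $\Lambda X$ vanish; the implicit application \iarg{n} on $c_c$ vanishes; and the type application $\cdot X$ on $xs$ vanishes. The explicit $\lambda$-binders over $x, xs, c_n, c_c$ and the ordinary applications remain. So after erasure I should obtain
\[
\fun{x, xs, c_n, c_c} c_c\, x\, (xs\, c_n\, c_c),
\]
which is the standard untyped Church encoding of cons.

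I would present this as a short two-line \texttt{align*} display, parallel in style to Lemma \ref{lem:nilcv}: the first line labeled ``Erase implicit abstractions, implicit applications, and type applications'' reducing the $\delta$-unfolded term, and the second line giving the final $\fun{x, xs, c_n, c_c} c_c\, x\, (xs\, c_n\, c_c)$ terminated with \eqed. There is no real obstacle here; the only minor care needed is to list all three kinds of erased syntax in the single justification, since unlike \reflem{nilcv} the body of \name{consCV} contains instances of all three (implicit abstraction, implicit application with the minus prefix, and type application with the dot prefix), whereas \name{nilCV} only exhibited implicit abstractions. Consequently, the proof is still essentially a one-step calculation, and its purpose is purely expository---reinforcing for the reader how the various annotation forms are treated by $\erase{\cdot}$ before these erasure computations become load-bearing in the coercion constructions of later sections.
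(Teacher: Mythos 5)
Your proposal matches the paper's proof essentially verbatim: the paper also $\delta$-unfolds \name{consCV} and then erases the implicit abstractions, implicit applications, and type applications to arrive at $\fun{x, xs, c_n, c_c} c_c \earg{x} \earg (xs \earg{c_n} \earg{c_c})$ (the paper just splits this into two labeled erasure steps rather than one). The result and the reasoning are identical, so this is correct.
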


\begin{proof}
{\small
\begin{align*}
  &=_\delta \erase{
    \all{A, n} \fun{x, xs}
    \all{X} \fun{c_n, c_c}
    c_c \iarg{n} \earg{x} \earg (xs \targ{X} \earg{c_n} \earg{c_c})
  }
  && \by{Erase implicit abstractions.}
  \\
  &=~ \fun{x, xs, c_n, c_c} \erase{
    c_c \iarg{n} \earg{x} \earg (xs \targ{X} \earg{c_n} \earg{c_c})
  }
  && \by{Erase implicit applications.}
  \\
  &= \fun{x, xs, c_n, c_c} c_c \earg{x} \earg (xs \earg{c_n} \earg{c_c})
  \eqed
\end{align*}
}
\end{proof}

\subsection{Unary Parametricity Theorem}
\labsec{ind:param}

The second component (\texttt{VecP}) is the unary parametricity
predicate on Church-encoded vectors (\texttt{VecC}). It takes 4 types
of abstract arguments, described below, and can be understood as an
abstract version of an \textit{eliminator} (i.e. an induction
principle in type theory):

\begin{enumerate}
\item An abstract \textit{return type} (\texttt{X}).
\item An abstract \textit{motive} (\texttt{P}, an abstract predicate
  over \texttt{X}).
\item Abstract Church-encoded \textit{constructors}
  (\texttt{cN} for nil and \texttt{cC} for cons).
\item Abstract parametricity \textit{branches}
  (\texttt{pN} for nil branch and \texttt{pC} for cons branch).
\end{enumerate}

The reader may wish to compare the
vector parametricity theorem type
below (\texttt{VecP}), which has abstract versions of arguments, with the type of
the eliminator \texttt{elimVec} in \refsec{ind:ind}, which has
concrete versions of arguments.

\begin{verbatim}
VecP ◂ Π A : ★ . Π n : Nat . VecC · A n ➔ ★ =
  λ A : ★ . λ n : Nat . λ xsC : VecC · A n .
  ∀ X : Nat ➔ ★ . ∀ P : Π n : Nat . X n ➔ ★ .
  ∀ cN : X zero . ∀ cC : ∀ n : Nat . A ➔ X n ➔ X (suc n) .
  Π pN : P zero cN .
  Π pC : ∀ n : Nat . ∀ xs : X n . 
    Π x : A . P n xs ➔ P (suc n) (cC -n x xs) .
  P n (xsC · X cN cC) .
\end{verbatim}

Notice that the abstract constructors (\texttt{cN} and \texttt{cC})
are implicit arguments, but the abstract parametricity
branches (\texttt{pN} and \texttt{pC}) are explicit arguments
(introduced as non-erased dependent functions via $\Pi$).
Furthermore, the number of explicit (non-erased) arguments in the types of
\texttt{cN} and \texttt{cC} is equal the number of explicit arguments
in the types of \texttt{pN} and \texttt{pC}, respectively. This
coincidence has been arranged so that Church-encoded vectors
(\texttt{VecC}) and their parametricity witnesses (\texttt{VecP})
share the same class of (erased) untyped term inhabitants.

\begin{convention}
We include ``\texttt{P}'' in
the suffix of an identifier to indicate that it relates to
the parametricity theorem of a Church-encoded datatype.
\end{convention}

\paragraph{Constructors}

Now we define ``constructors'' for witnessing parametricity in the
nil case (\texttt{nilPV}) and the cons case (\texttt{consPV}):

\begin{verbatim}
nilPV ◂ ∀ A : ★ . VecP · A zero (nilCV · A) = Λ A .
  Λ X . Λ P . Λ cN . Λ cC . λ pN . λ pC . pN .

consPV ◂ ∀ A : ★ . ∀ n : Nat . ∀ xsC : VecC · A n .
  Π x : A . VecP · A n xsC ➔ 
  VecP · A (suc n) (consCV · A -n x xsC) =
  Λ A . Λ n . Λ xsC . λ x . λ xsP .
  Λ X . Λ P . Λ cN . Λ cC . λ pN . λ pC .
  pC -n -(xsC · X cN cC) x (xsP · X · P -cN -cC pN pC) .
\end{verbatim}

Any additional arguments that would get in the way of the
parametricity witnesses erasing to their corresponding
Church-encodings appear as \textit{implicit} (erased) arguments,
such as \texttt{-(xsC · X cN cC)} in the definition of \texttt{consPV}.
The parametricity witness of the nil (resp. cons) case erases to the
Church-encoding of nil (resp. cons),
just like the erasure of \texttt{nilCV} (resp. \texttt{consCV}).

\begin{lemma}
\erase{\name{nilPV}} is the Church-encoding of nil.
\lablem{nilpv}
\end{lemma}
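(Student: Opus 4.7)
The plan is to mirror the proof of Lemma \ref{lem:nilcv} exactly. Unfolding the definition of \name{nilPV} gives
$\all{A, X, P, c_N, c_C} \fun{p_N, p_C} p_N$,
so the only syntactic material present is a chain of implicit abstractions ($\Lambda$) wrapping the body $\fun{p_N, p_C} p_N$, together with two explicit abstractions. Erasure removes all of the $\Lambda$'s, and the body contains no implicit or type applications to erase, so the computation is a single step of ``erase implicit abstractions''.

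The resulting term is $\fun{p_N, p_C} p_N$, which is precisely the Church-encoding of nil (the same canonical form obtained in Lemma \ref{lem:nilcv}). That the names of the explicit binders changed from $c_N, c_C$ to $p_N, p_C$ does not matter, since we compare erased terms up to $\alpha$-equivalence.

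I would therefore write a two-line display: first a $\delta$-step unfolding the definition, then a step erasing implicit abstractions, concluding with $\fun{p_N, p_C} p_N$ and a \qed. There is no real obstacle here; the only thing to verify is that \name{nilPV} has been crafted so that the number and position of its \emph{explicit} abstractions match those of \name{nilCV}, which is exactly the design invariant stated in the paragraph preceding the lemma (the ``coincidence has been arranged'' that makes \texttt{VecC} and \texttt{VecP} share the same class of untyped inhabitants). In LaTeX form:

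\begin{proof}
{\small
\begin{align*}
  &=_\delta \erase{
    \all{A, X, P, c_N, c_C} \fun{p_N, p_C} p_N
  }
  && \by{Erase implicit abstractions.}
  \\
  &=~ \fun{p_N, p_C} p_N
  \eqed
\end{align*}}
\end{proof}
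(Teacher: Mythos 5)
Your proof is correct and matches the paper's, which simply says ``Erase implicit abstractions''; your expanded display makes explicit the unfolding of \name{nilPV} and the $\alpha$-equivalence with the canonical Church-encoded nil, but the content is identical.
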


\begin{proof}
{\small Erase implicit abstractions. \qed}
\end{proof}

\begin{lemma}
\erase{\name{consPV}} is Church-encoded cons.
\lablem{conspv}
\end{lemma}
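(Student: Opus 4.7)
The plan is to apply the same stepwise erasure recipe used in the proof of Lemma \reflem{conscv}. The definition of \name{consPV} has been laid out so that, after erasure, it matches the Church-encoding of cons; the arrangement of implicit arguments (those written with $\Lambda$, with the minus prefix, and with the center-dot prefix) is chosen precisely so that the only surviving syntax after erasure is the $\lambda$-binders for $x, xsP, pN, pC$ and the explicit applications in the body.

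First I would apply the erasure rule for implicit abstractions: the seven $\Lambda$-binders (over $A, n, xsC, X, P, cN, cC$) vanish, leaving $\fun{x, xsP, pN, pC}$ at the head. Second I would apply the erasure rule for implicit applications and type applications to the body $pC \iarg{n} \iarg{(xsC \targ{X} \earg{cN} \earg{cC})} \earg{x} \earg(xsP \targ{X} \targ{P} \iarg{cN} \iarg{cC} \earg{pN} \earg{pC})$: the implicit arguments $\iarg{n}$, $\iarg{(xsC \targ{X} \earg{cN} \earg{cC})}$, $\iarg{cN}$, $\iarg{cC}$ disappear, as do the type applications $\targ{X}$ and $\targ{P}$. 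What remains is $pC \earg{x} \earg(xsP \earg{pN} \earg{pC})$.

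Assembling these two steps, the erasure of \name{consPV} is $\fun{x, xsP, pN, pC} pC \earg{x} \earg(xsP \earg{pN} \earg{pC})$, which is $\alpha$-equivalent (under the renaming $xsP \mapsto xs$, $pN \mapsto c_n$, $pC \mapsto c_c$) to the Church-encoding of cons obtained in Lemma \reflem{conscv}. I do not expect any genuine obstacle: the calculation is essentially identical to that of Lemma \reflem{conscv}, and the only thing to verify carefully is that each of the arguments that \emph{must} be marked implicit in \name{consPV} for the erasures to coincide --- namely $n$, the full expression $xsC \targ{X} \earg{cN} \earg{cC}$, and the abstract constructors $cN, cC$ passed to the recursive call --- is indeed so marked in the definition above.
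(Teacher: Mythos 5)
Your proof is correct and follows the same route as the paper, whose entire argument for this lemma is the one-liner ``erase implicit abstractions and applications''; you have simply spelled out the two erasure steps explicitly and verified the resulting $\alpha$-equivalence with the Church-encoding of cons from \reflem{conscv}. No gaps.
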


\begin{proof}
{\small Erase implicit abstractions and applications. \qed}
\end{proof}

\subsection{Reflection Theorem}
\labsec{ind:reflect}

The third (and final) component (\texttt{VecR}) is the reflection
theorem for Church-encoded vectors. It states that eliminating a
vector as a vector, and using its constructors
(\texttt{nilCV} and \texttt{consCV}) for the branches, results in the
vector being eliminated:

\begin{verbatim}
VecR ◂ Π A : ★ . Π n : Nat . VecC · A n ➔ ★ =
  λ A : ★ . λ n : Nat . λ xsC : VecC · A n .
  xsC · (VecC · A) nilCV consCV ≃ xsC .
\end{verbatim}

We cannot derive this using the Church-encoded vector type
(\texttt{VecC}), because it lacks an
induction principle~\cite{geuvers01}. Hence, we
include \texttt{VecR} as a component
of the inductive vector \textit{definition},
which will have an induction principle.

\begin{convention}
We include ``\texttt{R}'' in
the suffix of an identifier to indicate that it relates to
the reflection theorem of a Church-encoded datatype.
\end{convention}

\paragraph{Constructors}

Now we also define ``constructors'' for witnessing reflection in the
nil and cons cases:

\begin{verbatim}
nilRV ◂ ∀ A : ★ . VecR · A zero (nilCV · A) = Λ A . β .

consRV ◂ ∀ A : ★ . ∀ n : Nat .
  ∀ x : A . ∀ xsC : VecC · A n . ∀ q : VecR · A n xsC .
  VecR · A (suc n) (consCV · A -n x xsC)
  = Λ A . Λ n . Λ x . Λ xsC . Λ q . ρ+ q - β .
\end{verbatim}

Reflection for the nil case is proven trivially by \texttt{β}, the
reflexive constructor of equality types. Reflection for the cons case
is proven by first rewriting (using the equality elimination rule
\texttt{ρ}) by the reflection proof for the tail of the vector (\texttt{q}),
after which the proof becomes trivial (\texttt{β}).

\begin{remark}
A full definition of all introduction and elimination rules for our
base theory \cdle can be found in Figure 8 of Stump'17~\cite{stump17b}.
\end{remark}

\subsection{Inductive Type}
\labsec{ind:ind}

Finally, we define the inductive type of vectors (\texttt{Vec}) as the
dependent intersection (using type former $\iota$) of the
Church-encoded vector type
(\texttt{VecC} from \refsec{ind:church}) and its
parametricity theorem
(\texttt{VecP} from \refsec{ind:param}), which is again
intersected with the reflection theorem for
Church-encoded vectors (\texttt{VecR} from \refsec{ind:reflect}):

\begin{verbatim}
Vec ◂ ★ ➔ Nat ➔ ★ = λ A : ★ . λ n : Nat .
  ι xs : (ι xsC : VecC · A n . VecP · A n xsC) . VecR · A n xs.1 .
\end{verbatim}

A dependent intersection ($\iota$-type)
is like a dependent pair ($\Sigma$-type) whose erased components must be equal, and
whose pair constructor erases to its erased left component
($\erase{\pair{t}{t'}} = \erase{t}$).
\texttt{VecC} and \texttt{VecP} share the same class of erased inhabitants, so it
makes sense to intersect them. But, why does it make sense to
intersect these with proofs of equality (\texttt{VecR})?
The answer involves a modified reflexive equality introduction rule,
accepting any term as an additional argument, where the erasure of the
reflexive equality proof becomes the erasure of the term argument
($\erase{\refl{t}} = \erase{t}$).\footnote{A reflexive equality proof
  without a term argument ($\beta$) erases to the identity function by
  default, as in \cdle\cite{stump17b}.
  }

\paragraph{Constructor Helper Function}

Below, we define a helper function to construct a vector from the
intersection of \texttt{VecC} and \texttt{VecP}, and the reflection
theorem (\texttt{VecR}) as an implicit argument (\texttt{➾} is syntax
for non-dependent \texttt{∀}).

\begin{verbatim}
mkVec ◂ ∀ A : ★ . ∀ n : Nat .
  Π xs : (ι xsC : VecC · A n . VecP · A n xsC) .
  VecR · A n xs.1 ➾ Vec · A n =
  Λ A . Λ n . λ xs . Λ q . [ xs , ρ q - β{xs} ] .
\end{verbatim}

The left component of the intersection pair is our
(\texttt{VecC}/\texttt{VecP}) intersection
\texttt{xs}. Although the right component expects the reflection
theorem (\texttt{q}), we cannot return \texttt{q} immediately, because
the intersection pair introduction rule requires the erasure of both
components to be equal. Instead, we rewrite by our reflection proof,
changing the goal from (\texttt{xsC · (VecC · A) nilCV consCV ≃ xsC})
to (\texttt{xsC ≃ xsC}). Then, we use \texttt{β\{xs\}} to construct
a trivial equality that erases to the same term as the left component
of the pair (\texttt{xs}).

\begin{assumption}
  To conserve space, henceforth all proofs assume that implicit
  abstractions and applications have already been erased.
\labasm{eraseimp}
\end{assumption}

Below, we can see that \texttt{mkVec} is our first example of an
\textit{identity coercion} (a function erasing to the
identity). Additionally, the proof demonstrates how the intersection
pair components erase to the same term ($xs$), making it a well-typed
introduction of an intersection pair:

\begin{lemma}
\erase{\name{mkVec}} is the identity function:
\lablem{mkvec}
\end{lemma}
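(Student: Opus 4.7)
The plan is to chase the erasure rules through the definition of \texttt{mkVec}, relying on two facts already spelled out in the paper: by \refasm{eraseimp} all implicit abstractions ($\Lambda A$, $\Lambda n$, $\Lambda q$) disappear without further argument, and by the erasure convention for dependent intersections, the pair $\pair{t}{t'}$ erases to $\erase{t}$. Together these two observations should reduce \texttt{mkVec} to $\lambda xs.\, xs$ in just a couple of rewriting steps.

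Concretely, I would start from the body $\all{A,n} \fun{xs} \all{q} \pair{xs}{\rho\, q - \beta\{xs\}}$, invoke \refasm{eraseimp} to drop the three implicit abstractions, leaving $\fun{xs} \Erase{\pair{xs}{\rho\, q - \beta\{xs\}}}$, and then apply the intersection-pair erasure rule to collapse the pair to its left component $\erase{xs} = xs$. The final line of the display is then $\fun{xs} xs$, i.e., the identity function, which is what the lemma claims.

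The only step that warrants a moment of care is the use of the intersection-pair erasure rule: it requires that the two components $xs$ and $\rho\, q - \beta\{xs\}$ actually erase to the same untyped term. This is precisely what the $\rho$ rewrite and the term-annotated $\beta\{xs\}$ were inserted for in the definition, and the paper has already explained why (the modified reflexive equality rule makes $\erase{\beta\{xs\}} = \erase{xs}$, and $\rho$ is computationally transparent). So no new reasoning is needed here; I would simply flag that the well-typedness of the pair in \texttt{mkVec} is exactly what makes the erasure step legitimate, and then conclude.

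I expect no real obstacle: the proof is a short two-step calculation in the same style as Lemmas~\ref{lem:nilcv} and~\ref{lem:conscv}, differing only in that one of the steps invokes intersection-pair erasure rather than implicit-application erasure. I would therefore present it as a compact \texttt{align*} display mirroring the earlier proofs, justified on the right by ``Erase implicit abstractions'' followed by ``Erase intersection pair'', ending in $\fun{xs} xs$.
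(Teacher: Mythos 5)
Your proof is correct and takes essentially the same route as the paper's: unfold, drop the implicit abstractions by \refasm{eraseimp}, and collapse the intersection pair to its left component. The only difference is one of presentation --- the paper first erases the right component step by step ($\rewrite{q}{\refl{xs}}$ to $\refl{xs}$ to $xs$) \emph{before} erasing the pair, precisely so the calculation itself exhibits that both components erase to the same term $xs$ (justifying the well-typedness of the pair introduction), whereas you erase the pair immediately and make that well-typedness observation in prose.
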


\begin{proof}
{\small
\abovedisplayskip=-\baselineskip
\begin{align*}
  &=_\delta \fun{xs} \erase{
    \pair{xs}{ \Erase{ \rewrite{q}{\refl{xs}} } }
  }
  && \by{Erase rewrite.}
  \\
  &=~ \fun{xs} \erase{
    \pair{xs}{ \Erase{\refl{xs}} }
  }
  && \by{Erase reflexive equality.}
  \\
  &=~ \fun{xs} \erase{
    \pair{xs}{xs}
  }
  && \by{Erase pair.}
  \\
  &=~ \fun{xs}{xs}
  \eqed
\end{align*}}
\end{proof}

\begin{notation}
  We use large pipes (within small pipes) to focus on the erasure of
  subterms, rather than erasing according to
  a depth-first strategy. For example, \erase{f \earg x \earg \Erase{y}}
  denotes erasing the subterm $y$ first.
\end{notation}

\paragraph{Constructors}

Finally, it is straightforward to define constructors of our inductive
vector type (\texttt{Vec}) from the helper \texttt{mkVec} and the 3
constructor components we defined previously.

\begin{verbatim}
nilV ◂ ∀ A : ★ . Vec · A zero = Λ A .
  mkVec · A -zero [ nilCV · A , nilPV · A ] -(nilRV · A) .

consV ◂ ∀ A : ★ . ∀ n : Nat. A ➔ Vec · A n ➔ Vec · A (suc n) =
  Λ A . Λ n . λ x . λ xs . mkVec · A -(suc n)
  [ consCV · A -n x xs.1.1 , consPV · A -n -xs.1.1 x xs.1.2 ]
  -(consRV · A -n -x -xs.1.1 -xs.2) .
\end{verbatim}

Below, we verify that the inductive constructors erase to their untyped
Church-encoded equivalents:

\begin{theorem}
\erase{\name{nilV}} is the Church-encoding of nil:
\labthm{nilv}
\end{theorem}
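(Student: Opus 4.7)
The plan is to unfold the definition of \texttt{nilV}, then reduce it step by step using the already-established lemmas about the subcomponents, relying throughout on \refasm{eraseimp} to suppress all implicit abstractions and applications from the start.

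First, I would erase the implicit abstractions and implicit/type applications in \texttt{nilV}, so that the goal becomes computing the erasure of $\name{mkVec} \earg \pair{\name{nilCV}}{\name{nilPV}}$. Next, I would erase the dependent-intersection pair, using the fact that $\erase{\pair{t}{t'}} = \erase{t}$, which collapses $\pair{\name{nilCV}}{\name{nilPV}}$ to $\erase{\name{nilCV}}$. At this point, two already-proved facts do the remaining work: \reflem{mkvec} tells us that $\erase{\name{mkVec}}$ is the identity function (so $\erase{\name{mkVec} \earg t}$ is $\erase{t}$), and \reflem{nilcv} tells us that $\erase{\name{nilCV}}$ is exactly the Church-encoding $\lambda c_n.\, \lambda c_c.\, c_n$ of nil. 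Composing these two steps yields the desired equality.

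I would present this as a short \texttt{align*} block analogous to the one used for \reflem{mkvec}, with three justified lines: one erasing the pair, one invoking \reflem{mkvec} to strip the identity coercion \texttt{mkVec}, and one invoking \reflem{nilcv} to produce $\lambda c_n.\, \lambda c_c.\, c_n$. Because \refasm{eraseimp} is already in force, I do not need a separate line to discuss erasing $\Lambda A$, \texttt{-zero}, or \texttt{-(nilRV · A)}.

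There is essentially no hard step here: the proof is a mechanical chain of erasure rewrites. The only mild subtlety is making sure the pair-erasure step is applied \emph{before} invoking \reflem{mkvec}, since \reflem{mkvec} removes \texttt{mkVec} by treating its argument as an opaque variable; using the \textbf{notation} for focused erasure (large pipes) would make this evaluation order explicit and unambiguous.
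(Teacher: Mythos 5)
Your proof is correct and follows essentially the same chain of erasure steps as the paper's (unfold \texttt{nilV}, collapse the intersection pair to its left component, strip \texttt{mkVec} by \reflem{mkvec}, and finish with \reflem{nilcv}). The one difference is that the paper first erases \emph{both} components of the pair --- invoking \reflem{nilpv} for \texttt{nilPV} alongside \reflem{nilcv} for \texttt{nilCV} --- before applying the pair-erasure rule, precisely in order to exhibit that the two components erase to $\alpha$-equivalent terms and hence that the intersection introduction is well-typed; your version omits \reflem{nilpv}, which is harmless for computing the erasure itself but drops that type-correctness observation (and your remark that the pair must be erased \emph{before} \reflem{mkvec} is not actually needed, since $\erase{\name{mkVec}} = \fun{xs}{xs}$ makes the two orders interchangeable up to $\beta$).
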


\begin{proof}
{\small
\abovedisplayskip=-\baselineskip
\begin{align*}
  &=_\delta \erase{
    \name{mkVec} \earg \pair{ \Erase{\name{nilCV}} }{ \name{nilPV} }
  }
  && \by{By \reflem{nilcv}.}
  \\
  &=~ \erase{
    \name{mkVec} \earg \pair{ \fun{c_n, c_c} c_n }{ \Erase{\name{nilPV}} }
  }
  && \by{By \reflem{nilpv}.}
  \\
  &=_\alpha \erase{
    \name{mkVec} \earg \Erase{\pair{ \fun{c_n, c_c} c_n }{ \fun{p_n, p_c} p_n }}
  }
  && \by{Erase pair.}
  \\
  &=~ \erase{
    \name{mkVec} \earg (\fun{c_n, c_c} c_n)
  }
  && \by{By \reflem{mkvec}.}
  \\
  &=_\beta \fun{c_n, c_c} c_n
  \eqed
\end{align*}}
\end{proof}

In the proof of \refthm{nilv} above, we can see that the intersection
pair passed to \texttt{mkVec} is type correct, as both of its
components erase to the same ($\alpha$-equivalent) term.

\begin{theorem}
\erase{\name{consV}} is the Church-encoding of cons:
\labthm{consv}
\end{theorem}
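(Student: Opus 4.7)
The plan is to mirror the proof of \refthm{nilv}, but now track the additional structure arising from the cons case, especially the intersection projections $xs.1.1$ and $xs.1.2$ on the bound variable $xs : \name{Vec} \cdot A \earg n$.

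First, I would invoke \refasm{eraseimp} to erase all implicit abstractions ($\Lambda A$, $\Lambda n$) and implicit applications ($\cdot A$, $-n$, $-xs.1.1$, and the entire reflection argument $-(\name{consRV} \cdots)$), reducing the body to $\fun{x, xs} \name{mkVec} \earg \pair{\name{consCV} \earg x \earg xs.1.1}{\name{consPV} \earg x \earg xs.1.2}$. I would then use the fact that an $\iota$-projection erases to its argument, so both $xs.1.1$ and $xs.1.2$ erase to $xs$.

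Next, I would apply \reflem{conscv} to rewrite $\name{consCV} \earg x \earg xs$ to $\fun{c_n, c_c} c_c \earg x \earg (xs \earg c_n \earg c_c)$, and \reflem{conspv} to rewrite $\name{consPV} \earg x \earg xs$ to the same term (up to $\alpha$-renaming of the bound branch variables $p_n, p_c$). This confirms that the two components of the intersection pair are $\alpha$-equivalent after erasure, so the pair is well-formed; erasing the pair then leaves only its left component.

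Finally, the remaining term has the shape $\fun{x, xs} \name{mkVec} \earg (\fun{c_n, c_c} c_c \earg x \earg (xs \earg c_n \earg c_c))$, at which point I apply \reflem{mkvec} to replace $\name{mkVec}$ with the identity function and $\beta$-reduce to obtain $\fun{x, xs, c_n, c_c} c_c \earg x \earg (xs \earg c_n \earg c_c)$, which is exactly the Church-encoding of cons. The main obstacle is purely bookkeeping: ensuring the $\alpha$-renaming lines up so that the two pair components are visibly equal, and keeping track of which nested projections of $xs$ erase away. No new insight beyond what was used for \refthm{nilv} is required.
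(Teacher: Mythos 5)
Your proposal is correct and follows essentially the same route as the paper, which proves \refthm{consv} by repeating the proof of \refthm{nilv} with \reflem{conscv} and \reflem{conspv} substituted for the nil lemmas; your additional bookkeeping about the projections $xs.1.1$ and $xs.1.2$ erasing to $xs$ is accurate and merely makes explicit a step the paper leaves implicit.
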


\begin{proof}
{\small
  Same as the proof of \refthm{nilv}, but erasing
  \name{consCV} (instead of \name{nilCV})
  by \reflem{conscv} in the first step, and
  \name{consPV} (instead of \name{nilPV})
  by \reflem{conspv} in the second step. \qed
}
\end{proof}

\paragraph{Eliminator}

The whole point of defining the inductive vector type (\texttt{Vec}),
as opposed to the Church-encoded vector type (\texttt{VecC}), is so
we can define its eliminator (i.e. its induction principle in type theory):

\begin{verbatim}
elimVec ◂ ∀ A : ★ . ∀ n : Nat . Π xs : Vec · A n .
  ∀ P : Π n : Nat . Vec · A n ➔ ★ .
  Π pN : P zero (nilV · A) .
  Π pC : ∀ n : Nat . ∀ xs : Vec · A n . Π x : A . 
    P n xs ➔ P (suc n) (consV · A -n x xs) .
  P n xs
  = Λ A . Λ n . λ xs . Λ P . ρ ς xs.2 -
  (xs.1.2 · (Vec · A) · P -(nilV · A) -(consV · A)) .
\end{verbatim}

We apply the parametricity theorem (via projection
\texttt{xs.1.2}) to the motive \texttt{P} and the concrete vector
constructors \texttt{nilV} and \texttt{consV}, instantiating the
abstract constructor arguments of \texttt{VecP}. The result has the
following type:

\begin{verbatim}
P n (xs.1.1 · (VecC · A) (nilV · A) (consV · A))
\end{verbatim}

Note that the second argument to \texttt{P} is exactly one of the
sides of our reflection theorem (\texttt{VecR}), so we rewrite
(using \texttt{ρ}) by the
reflection proof (via its projection \texttt{xs.2}) to arrive at our
goal (\texttt{P n xs}). \footnote{Because \texttt{ρ} rewrites the left
  side of an equality to the right side, we get the symmetric version
  of our reflection proof \texttt{xs.2} by applying our symmetry
  operator \texttt{ς}. The operator \texttt{ς} changes any equation
  \texttt{t1 ≃ t2} to \texttt{t2 ≃ t1}.
  }

Because the projections and rewrites are erased, the eliminator
\texttt{elimVec} is actually a dependent identity coercion
(from \texttt{Vec} to the rest of the eliminator type, starting
with \texttt{∀ P} and ending with \texttt{P n xs}):

\begin{lemma}
\erase{\name{elimVec}} is the identity function:
\lablem{elimvec}
\end{lemma}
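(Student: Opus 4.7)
The plan is to compute the erasure of $\name{elimVec}$ one syntactic construct at a time, in the same style as \reflem{mkvec}. By \refasm{eraseimp}, all implicit abstractions ($\Lambda A$, $\Lambda n$, $\Lambda P$) and implicit applications (to $\name{nilV} \cdot A$ and $\name{consV} \cdot A$) have already been removed, so after the first step the goal reduces to erasing
\[
\fun{xs} \rho\; \varsigma\, xs.2 - (xs.1.2 \cdot (\name{Vec} \cdot A) \cdot P).
\]

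From there I would proceed in three short steps. First, erase the rewrite: exactly as $\rewrite{q}{\refl{xs}}$ erased to $\erase{\refl{xs}}$ in the proof of \reflem{mkvec}, here $\rho\; \varsigma\, xs.2 - t$ erases to $\erase{t}$, so both the $\rho$ combinator and its (symmetrized) equality argument $\varsigma\, xs.2$ vanish. Second, erase the two type applications $\cdot (\name{Vec} \cdot A)$ and $\cdot P$, leaving $\fun{xs} \erase{xs.1.2}$. Third, observe that the dependent intersection projections are computationally trivial: since pair introduction $\pair{t}{t'}$ erases to $\erase{t}$, the projections $.1$ and $.2$ are pure type-level bookkeeping, and $xs.1.2$ erases simply to $xs$. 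The result is $\fun{xs} xs$, as required.

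There is no real obstacle here --- the lemma is, as the opening remark of the paper anticipates, a direct consequence of definitional equality. The step most deserving of care is the first, since it collapses two pieces of syntax simultaneously (the $\rho$-rewrite together with its $\varsigma$-symmetrized equality proof \texttt{xs.2}); but both are erased by the rules of \cdle~\cite{stump17b}, so there is nothing to compute. The only reason to spell the argument out at all is to make visible \emph{why} the inductive eliminator is itself a (dependent) identity coercion, foreshadowing the identity coercions \texttt{v2l} and \texttt{l2v} to come.
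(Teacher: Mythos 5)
Your proof is correct and follows essentially the same route as the paper's: both reduce the claim to the facts that the $\rho$-rewrite (with its $\varsigma$-symmetrized argument) erases to its body and that the intersection projections $xs.1.2$ erase to $xs$, differing only in the order in which these two erasure steps are applied. No gaps.
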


\begin{proof}
{\small
\abovedisplayskip=-\baselineskip
\begin{align*}
  &=_\delta \fun{xs} \erase{
    \rewrite{
      \varsigma \earg \Erase{xs.2}
    }{
      \Erase{xs.1.2}
    }
  }
  && \by{Erase projections.}
  \\
  &=~ \fun{xs} \erase{
    \rewrite{
      \varsigma \earg xs
    }{
      xs
    }
  }
  && \by{Erase rewrite.}
  \\
  &=~ \fun{xs}{xs}
  \eqed
\end{align*}}
\end{proof}

\section{Reusing Vector Definitions}
\labsec{vecreuse}

In this section we demonstrate reusing vector programs and proofs to
define list-versions of the programs and proofs. Through the use of
identity coercions, our program reuse does not introduce runtime overhead,
and our proof reuse does not introduce equational reasoning overhead.

\subsection{Identity Coercion from Vec to List}
\labsec{vecreuse:v2l}

We extend Barras and Bernardo's~\cite{barras:implicit} identity
coercion from Church-encoded vectors to lists (\texttt{v2lC}),
to an identity coercion between inductive versions
of the types (\texttt{v2l}), i.e. those supporting induction
principles.
Identity coercions for inductive types are defined
using the same 3 components as inductive constructors (a
Church-encoding component, like in \refsec{ind:church},
a parametricity theorem component, like in \refsec{ind:param}, and a
reflection theorem component, like in \refsec{ind:reflect}).

\paragraph{Church-Encoding}

\begin{figure}[t]
\centering
\begin{verbatim}
v2lC' ◂ ∀ A : ★ . ∀ n : Nat . VecC · A n ➔ ListC · A
  = Λ A . Λ n . λ xs .
  xs · (λ _ : Nat . ListC · A) (nilCL · A) (Λ _ . consCL · A) .
\end{verbatim}
\caption{Non-identity coercion from vectors to lists.}
\labfig{nonident}
\end{figure}

A standard way of translating a Church-encoded vector to a
Church-encoded list is to eliminate the vector at the \textit{concrete} list
type, as show in \reffig{nonident}. Alternatively, we can
``go underneath'' the list codomain, and eliminate the vector using
the \textit{abstract} list return type (\texttt{X}) and
\textit{abstract} list constructors
(\texttt{cN} for nil and \texttt{cC} for cons). This alternative way,
which is possible when the domain is a \textit{subtype} of the
codomain, appears below.

\begin{verbatim}
v2lC ◂ ∀ A : ★ . ∀ n : Nat . Vec · A n ➔ ListC · A
  = Λ A . Λ n . λ xs . Λ X . λ cN . λ cC .
  xs.1.1 · (λ _ : Nat . X) cN (Λ _ . cC) .
\end{verbatim}

One minor difference, compared to \texttt{v2lC'} in \reffig{nonident},
is that \texttt{v2lC} takes an inductive (rather than Church) vector as its
argument. Hence, we access the Church-encoded vector via the
projection \texttt{xs.1.1}.
This difference only becomes necessary in
\refsec{vecreuse:l2v}, where it allows us to define a
\textit{dependent} identity coercion.

Barras and Bernardo point out that after erasure, the alternative
abstract elimination $\eta$-contracts to the identity
function, and for this reason we call it an
``identity coercion'':

\begin{lemma}
\erase{\name{v2lC}} is the identity function:
\lablem{v2lc}
\end{lemma}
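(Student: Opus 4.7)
The plan is to compute $\erase{\name{v2lC}}$ step by step, following the same pattern used in the proofs of \reflem{mkvec} and \reflem{elimvec}, and then finish with $\eta$-contraction (as the prose preceding the lemma already hints at).

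First, I would unfold the definition of \texttt{v2lC} and, by \refasm{eraseimp}, drop all the implicit abstractions ($\Lambda A$, $\Lambda n$, $\Lambda X$, and the inner $\Lambda \_$ wrapping \texttt{cC}). This leaves a term of shape $\lambda xs.\, \lambda c_N.\, \lambda c_C.\, xs.1.1 \cdot (\lambda \_ : \name{Nat}.\, X)\, c_N\, c_C$ (with the type application still present). Next I would erase the type application $\cdot (\lambda \_ : \name{Nat}.\, X)$, since type applications are erased, and then erase the projections so that $xs.1.1$ becomes simply $xs$ (just as in \reflem{elimvec}).

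At this point the body has reduced to $\lambda xs.\, \lambda c_N.\, \lambda c_C.\, xs\, c_N\, c_C$. The final step, which is what distinguishes this proof from the previous ones, is to apply $\eta$-contraction twice: first contracting $\lambda c_C.\, xs\, c_N\, c_C$ to $xs\, c_N$, then contracting $\lambda c_N.\, xs\, c_N$ to $xs$, yielding $\lambda xs.\, xs$. Since the conversion relation $=_{\alpha\beta\eta}$ permits $\eta$, this establishes that \erase{\name{v2lC}} equals the identity function.

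The only real subtlety is remembering that our definitional equality is up to $\alpha\beta\eta$ (as emphasized in \refsec{intro}), so the two $\eta$-contractions are legitimate; no $\beta$-redexes or rewrites need to be pushed through as in \reflem{mkvec}. In that sense this is arguably the simplest calculation so far, but it is worth writing out explicitly because it is the first time $\eta$ does nontrivial work, and it explains why the alternative ``go underneath the codomain'' style of elimination from \texttt{v2lC'} to \texttt{v2lC} is what produces an identity coercion in the first place.
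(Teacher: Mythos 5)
Your proposal is correct and matches the paper's own proof: both erase the implicit abstractions/applications and type applications (per \refasm{eraseimp}), erase the projection so that $xs.1.1$ becomes $xs$, and finish by $\eta$-contracting $\fun{c_n}\fun{c_c} xs \earg c_n \earg c_c$ to $xs$. The paper simply presents the two $\eta$-contractions as a single ``Contract'' step.
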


\begin{proof}
{\small
\abovedisplayskip=-\baselineskip
\begin{align*}
  &=_\delta \fun{xs} \fun{c_n} \fun{c_c} \erase{
    xs.1.1 \earg c_n \earg c_c
  }
  && \by{Erase projection.}
  \\
  &=~ \fun{xs} (\fun{c_n} \fun{c_c}
    xs \earg c_n \earg c_c)
  && \by{Contract.}
  \\
  &=_\eta \fun{xs}{xs}
  \eqed
\end{align*}}
\end{proof}

\paragraph{Parametricity Theorem}

Second, we translate the vector parametricity theorem to the list
parametricity theorem, this time projecting out the vector
parametricity theorem (via \texttt{xs.1.2}). In addition to the abstract
arguments that \texttt{v2lC} receives from its codomain,
\texttt{v2lP} also receives an abstract motive (\texttt{P}) and
abstract parametricity theorem branches
(\texttt{pN} and \texttt{pC}).

\begin{verbatim}
v2lP ◂ ∀ A : ★ . ∀ n : Nat .
  Π xs : Vec · A n . ListP · A (v2lC · A -n xs)
  = Λ A . Λ n . λ xs . Λ X . Λ P . Λ cN . Λ cC . λ pN . λ pC .
  xs.1.2 · (λ _ : Nat . X) · (λ _ : Nat . P)
    -cN -(Λ _ . cC) pN (Λ _ . pC) .
\end{verbatim}

Because the abstract motive (\texttt{P}) is an implicit argument,
and the abstract Church constructors (\texttt{cN} and \texttt{cC}) are
also implicit arguments, they get erased,
thus \texttt{v2lP} is also an identity coercion
(albeit between parametricity theorems):

\begin{lemma}
\erase{\name{v2lP}} is the identity function:
\lablem{v2lp}
\end{lemma}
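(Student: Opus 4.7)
The plan is to mirror the proof of \reflem{v2lc}, but with careful attention to the fact that \texttt{v2lP} has two explicit lambda-bound arguments in its body (\texttt{pN} and \texttt{pC}) rather than the two Church-constructor arguments of \texttt{v2lC}, and that some of its applied arguments are themselves implicit abstractions (e.g.\ \texttt{Λ \_ . pC}) whose erasure collapses them to bare variables.

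First I would unfold the definition ($=_\delta$) and invoke \refasm{eraseimp} to erase all implicit abstractions (the outer $\Lambda A$, $\Lambda n$, $\Lambda X$, $\Lambda P$, $\Lambda cN$, $\Lambda cC$) and all implicit applications (\texttt{-cN} and \texttt{-(Λ \_ . cC)}) and type applications ($\cdot (\lambda \_ . X)$ and $\cdot (\lambda \_ . P)$). After this step the body becomes $\fun{xs, pN, pC} xs.1.2 \earg pN \earg \Erase{\Lambda \_ . pC}$. The subterm $\Lambda \_ . pC$ is an implicit abstraction occurring under an explicit application, so a second use of \refasm{eraseimp} (on this focused subterm, justifying the notation introduced just before the statement) reduces it to $pC$.

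Next I would erase the projection $xs.1.2$ to $xs$, just as in the proof of \reflem{v2lc}, since projections are erased. This yields $\fun{xs, pN, pC} xs \earg pN \earg pC$. Two successive $\eta$-contractions (first over $pC$, then over $pN$) then produce $\fun{xs} xs$, completing the proof.

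The main obstacle—really the only point that isn't a direct replay of \reflem{v2lc}—is recognizing that \texttt{Λ \_ . pC}, despite appearing in a non-erased argument position, is still an implicit abstraction that must be erased in its own right before $\eta$-contraction becomes applicable. Once that subtlety is handled, the remaining steps are the same projection-erasure and $\eta$-contraction pattern already established for the Church-encoding coercion.
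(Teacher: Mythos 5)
Your proof is correct and follows essentially the same route as the paper's: unfold, erase the implicit machinery (per \refasm{eraseimp}), erase the projection $xs.1.2$, and $\eta$-contract twice. The only difference is that you make explicit the erasure of the subterm $\Lambda\,\_\,.\,p_c$ to $p_c$, which the paper silently folds into the initial erasure step; that is a harmless (indeed helpful) clarification.
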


\begin{proof}
{\small
\abovedisplayskip=-\baselineskip
\begin{align*}
  &=_\delta \fun{xs} \fun{p_n} \fun{p_c} \erase{
    xs.1.2 \earg p_n \earg p_c
  }
  && \by{Erase projection.}
  \\
  &=~ \fun{xs} (\fun{p_n} \fun{p_c}
    xs \earg p_n \earg p_c)
  && \by{Contract.}
  \\
  &=_\eta \fun{xs}{xs}
  \eqed
\end{align*}}
\end{proof}

\paragraph{Reflection Theorem}

Third, we reuse the Church vector reflection theorem
(projection \texttt{xs.2}) to prove the vector  reflection theorem.

\begin{verbatim}
v2lR ◂ ∀ A : ★ . ∀ n : Nat .
  Π xs : Vec · A n . ListR · A (v2lC · A -n xs)
  = Λ A . Λ n . λ xs . xs.2 .
\end{verbatim}

\begin{convention}
We suffix an identifier with ``\texttt{L}''
to indicate that it relates to lists.
\end{convention}

A proof of vector reflection
(\texttt{xsC · (VecC · A) nilCV consCV ≃ xsC}) can be used as a proof
of list reflection
(\texttt{xsC · (ListC · A) nilCL consCL ≃ xsC}),
because their types are definitionally equal (where definitional
equality is defined on erased terms). This works because the type
applications (\texttt{VecC · A} and \texttt{ListC · A}) are erased,
\texttt{nilCV} and \texttt{nilCL} both erase to
the untyped Church-encoding of nil
(by \reflem{nilcv} for vectors, and similarly for lists), and
\texttt{consCV} and \texttt{consCL} both erase to
the untyped Church-encoding of cons
(by \reflem{conscv} for vectors, and similarly for lists).

\paragraph{Identity Coercion}

Finally, we put together our 3 components
(\texttt{v2lC}, \texttt{v2lP}, and \texttt{v2lR}) to translate
inductive vectors to inductive lists, using the
\texttt{mkList} helper constructor. This is analogous to defining the
vector constructors in terms of their 3 components and \texttt{mkVec}
in \refsec{ind:ind}.

\begin{verbatim}
v2l ◂ ∀ A : ★ . ∀ n : Nat . Vec · A n ➔ List · A
  = Λ A . Λ n . λ xs . mkList · A
  [ v2lC · A -n xs , v2lP · A -n xs ] -(v2lR · A -n xs) .
\end{verbatim}

We have successfully generalized Barras and Bernardo's non-dependent
identity coercion between \textit{Church-encoded} vectors and lists, to a
non-dependent identity coercion between \textit{inductive} vectors and lists:

\begin{theorem}
\erase{\name{v2l}} is the identity function:
\labthm{v2l}
\end{theorem}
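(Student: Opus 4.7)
The plan is to mirror the structure of \reflem{mkvec} (and indeed the earlier proofs of \refthm{nilv} and \refthm{consv}), by exploiting Assumption \ref{asm:eraseimp} so that erasure of $\name{v2l}$ reduces to analyzing $\fun{xs} \name{mkList} \earg \pair{\name{v2lC} \earg xs}{\name{v2lP} \earg xs}$, and then rewriting each subterm using the previously proven identity lemmas.

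First I would unfold the definition of $\name{v2l}$, dropping implicit abstractions (for $A$ and $n$) and the implicit application $\iarg{(\name{v2lR} \cdot A \iarg{n} \earg xs)}$ in accordance with \refasm{eraseimp}. This leaves $\fun{xs} \erase{\name{mkList} \earg \pair{\name{v2lC} \earg xs}{\name{v2lP} \earg xs}}$. Next I would focus on the left component of the pair and apply \reflem{v2lc} to rewrite it as $xs$ (up to $\beta\eta$), and then focus on the right component and apply \reflem{v2lp} to also rewrite it as $xs$. At this point the intersection pair $\pair{xs}{xs}$ is a well-formed introduction because its two components erase to the same term, so the pair-erasure rule ($\erase{\pair{t}{t'}} = \erase{t}$) collapses it to $xs$, yielding $\fun{xs} \erase{\name{mkList} \earg xs}$.

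The final step is to appeal to the identity-coercion property of $\name{mkList}$, which is the straightforward list-counterpart of \reflem{mkvec} (the proof is verbatim the same after replacing $\name{VecC}$, $\name{VecP}$, $\name{VecR}$ with $\name{ListC}$, $\name{ListP}$, $\name{ListR}$). Applying this collapses $\name{mkList} \earg xs$ to $xs$, giving $\fun{xs}{xs}$ as desired.

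The main obstacle is not really a difficulty so much as a bookkeeping concern: one must ensure that the erased pair $\pair{\erase{\name{v2lC} \earg xs}}{\erase{\name{v2lP} \earg xs}}$ is indeed well-typed as an intersection, which relies on both sides erasing to the same term — precisely what \reflem{v2lc} and \reflem{v2lp} deliver. Once those identities are in hand, and once the analogue of \reflem{mkvec} for $\name{mkList}$ is invoked, the chain of $\beta\eta$-equalities is routine and the proof proceeds exactly along the lines of the proof of \refthm{nilv}.
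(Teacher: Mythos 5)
Your proposal matches the paper's proof step for step: erase the implicit arguments (including the erased \texttt{v2lR} argument), rewrite the left and right pair components to $xs$ via \reflem{v2lc} and \reflem{v2lp}, collapse the intersection pair, and finish with the list analogue of \reflem{mkvec}. The remark about both components needing to erase to the same term is also exactly the observation the paper makes immediately after its proof, so this is the same argument.
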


\begin{proof}
{\small
\abovedisplayskip=-\baselineskip
\begin{align*}
  &=_\delta \fun{xs} \erase{
    \name{mkList} \earg
    \pair{
      \Erase{\name{v2lC} \earg xs}
    }{
      \name{v2lP} \earg xs
    }
  }
  && \by{By \reflem{v2lc}.}
  \\
  &=_\beta \fun{xs} \erase{
    \name{mkList} \earg
    \pair{
      xs
    }{
      \Erase{\name{v2lP} \earg xs}
    }
  }
  && \by{By \reflem{v2lp}.}
  \\
  &=_\beta \fun{xs} \erase{
    \name{mkList} \earg
    \Erase{\pair{xs}{xs}}
  }
  && \by{Erase pair.}
  \\
  &=~ \fun{xs} \erase{
    \name{mkList} \earg xs
  }
  && \by{By \reflem{mkvec} (for lists).}
  \\
  &=_\beta \fun{xs}{xs}
  \eqed
\end{align*}}
\end{proof}

Type checking requires that both components of
the pair (introducing an intersection type), in the definition of
\texttt{v2l}, are definitionally equal. The
third step in the proof of \refthm{v2l} demonstrates that this
requirement is satisfied, after erasing the left and right components
in the first and second steps, respectively.

\begin{remark}
Although the coercion of the reflection theorem (\texttt{v2lR})
happens to erase to the identity function, we never emphasize the erasure of
reflection theorem proofs. This is because they appear in erased
argument positions in the definitions of identity coercions (e.g.
the erased argument \texttt{(v2lR · A -n xs)}
of \texttt{mkList}, in the definition of \texttt{v2l}).
\labrem{reflect}
\end{remark}

\begin{figure}[t]
\centering
\begin{verbatim}
l2vC ◂ ∀ A : ★ . Π xs : List · A . VecC · A (length · A xs)
  = Λ A . λ xs . Λ X . λ cN . λ cC . elimList · A xs ·
  (λ xs : List · A . X (length · A xs))
  cN (Λ xs . cC -(length · A xs)) .

l2vP ◂ ∀ A : ★ . Π xs : List · A . VecP · A (length · A xs) (l2vC · A xs)
  = Λ A . λ xs . Λ X . Λ P . Λ cN . Λ cC . λ pN . λ pC . elimList · A 
  xs · (λ xs : List · A . P (length · A xs) (l2vC · A xs · X cN cC))
  pN (Λ xs . pC -(length · A xs) -(l2vC · A xs · X cN cC)) .

l2vR ◂ ∀ A : ★ . Π xs : List · A . VecR · A (length · A xs) (l2vC · A xs)
  = Λ A . λ xs . xs.2 .

l2v ◂ ∀ A : ★ . Π xs : List · A . Vec · A (length · A xs)
  = Λ A . λ xs . mkVec · A -(length · A xs)
  [ l2vC · A xs , l2vP · A xs ] -(l2vR · A xs) .
\end{verbatim}
\caption{Identity coercion from lists to vectors.}
\labfig{l2v}
\end{figure}

\subsection{Identity Coercion from List to Vec}
\labsec{vecreuse:l2v}

Barras and Bernardo's \textit{non-dependent} identity coercion takes
Church-encoded vectors to lists, which we have extended
(in \refsec{vecreuse:v2l}) to take inductive vectors to lists.
Because we are using inductive types, we can now define the
\textit{dependent} identity coercion from inductive lists to vectors
(\texttt{l2v}). Church-encoded types cannot be used to define
\texttt{l2v}, as the resulting vector length depends on the input
vector in the type of \texttt{l2v}
(i.e. \texttt{Π xs : List · A . Vec · A (length · A xs)}, as in
\reffig{l2v}). Although we could express the type using Church-encoded
data, we could not inhabit it, as reducing \texttt{length} in the
codomain (when defining the nil and cons branches of the coercion)
requires an induction principle (i.e. \texttt{elimList}).

\reffig{l2v} contains the definition of \texttt{l2v}, which follows
the same 3-component structure used to define \texttt{v2l} in
\refsec{vecreuse:l2v}. The primary difference is that the
Church (\texttt{l2vC}) and parametricity (\texttt{l2vP}) components
are defined by \textit{induction}, using \texttt{elimList}.
It is crucial that the domains of \texttt{l2vC} and \texttt{l2vP} are
\textit{inductive} lists, because \texttt{l2vC} and \texttt{l2vP} need
to be defined by induction.

\begin{lemma}
\erase{\name{l2vC}} is the identity function:
\lablem{l2vc}
\end{lemma}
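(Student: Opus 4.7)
The plan is to mirror the proofs of \reflem{v2lc} and \reflem{elimvec}, exploiting the fact that all the type applications, implicit abstractions, and implicit applications that keep \name{l2vC} well-typed at a dependent type vanish under erasure, leaving behind a term that $\eta$-contracts to the identity.

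First I would apply \refasm{eraseimp} together with the erasure of the type applications (to the abstract return-type family) in the body of \name{l2vC}. After erasing the outer $\Lambda A$, $\Lambda X$, the inner $\Lambda xs$ in the cons branch, and the implicit application $\nega(\name{length} \earg A \earg xs)$, as well as the type application $\cdot (\lambda xs.\,X\,(\name{length}\,xs))$, the body collapses from
\[
  \lam{xs}{\lam{c_n}{\lam{c_c}{\name{elimList} \earg xs \earg c_n \earg c_c}}}
\]
after erasure. This step is entirely routine and parallels the opening step of the proof of \reflem{v2lc}.

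Next I would invoke the list analogue of \reflem{elimvec}: since \name{elimList} erases to the identity function, the subterm $\name{elimList}\,xs$ reduces via $\beta$ to $xs$, giving $\lam{xs}{\lam{c_n}{\lam{c_c}{xs \earg c_n \earg c_c}}}$. Two $\eta$-contractions then yield $\lam{xs}{xs}$, exactly as in the closing steps of the proof of \reflem{v2lc}.

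The only point requiring a small sanity check is that \name{l2vC}, despite being defined by induction over the inductive list (rather than by direct projection as in \name{v2lC}), nonetheless erases to the same shape. The reason is precisely \reflem{elimvec}: inductive eliminators are themselves identity coercions after erasure, so defining a coercion by induction costs nothing at the untyped level. There is no real obstacle here; the proof is a straightforward adaptation of \reflem{v2lc} in which the projection \texttt{xs.1.1} is replaced by an \name{elimList} call that itself erases to the identity.
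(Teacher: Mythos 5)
Your proof is correct and follows exactly the paper's argument: erase the implicit abstractions, implicit applications, and type applications to obtain $\fun{xs}\fun{c_n}\fun{c_c}\,\name{elimList} \earg xs \earg c_n \earg c_c$, then use the fact that \name{elimList} (the list analogue of \reflem{elimvec}) erases to the identity, and finish by $\eta$-contraction. The paper's proof of \reflem{l2vc} is precisely this three-step computation, so there is nothing to add.
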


\begin{proof}
{\small
\abovedisplayskip=-\baselineskip
\begin{align*}
  &=_\delta \fun{xs} \fun{c_n} \fun{c_c} \erase{
    \name{elimList} \earg xs \earg c_n \earg c_c
  }
  && \by{By \reflem{elimvec} (for lists).}
  \\
  &=_\beta \fun{xs} (\fun{c_n} \fun{c_c}
    xs \earg c_n \earg c_c)
  && \by{Contract.}
  \\
  &=_\eta \fun{xs}{xs}
  \eqed
\end{align*}}
\end{proof}

It is not enough that we can define a coercion from lists to vectors,
we also want \texttt{l2v} to be an \textit{identity} coercion. This is
established by \refthm{l2v}, which relies on \texttt{l2vC} being an
identity coercion (\reflem{l2vc}), and \texttt{l2vP} being an identity
coercion (\reflem{l2vp}).
The proof that \texttt{l2vC} is an identity coercion is similar to
Barras and Bernardo's argument about \texttt{v2lC}
(\reflem{v2lc}), relying on erasure and $\eta$-contraction. The main
difference is that we also rely on the fact that our induction
principle (\texttt{elimList}) is \textit{also} an identity coercion
(\reflem{elimvec}, but for the list datatype).

\begin{lemma}
\erase{\name{l2vP}} is the identity function:
\lablem{l2vp}
\end{lemma}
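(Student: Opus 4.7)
The plan is to mirror the proof of \reflem{l2vc} almost step for step, since \name{l2vP} has the same structural shape as \name{l2vC}; it simply quantifies over additional abstract arguments (a motive \texttt{P} and two parametricity branches $p_n$, $p_c$) in place of the abstract Church constructors. Of these extra binders, the motive is an implicit argument (and so is erased), while $p_n$ and $p_c$ are explicit and play the same role in the final expression that $c_n$ and $c_c$ played in \reflem{l2vc}.

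First, I would apply \refasm{eraseimp} to strip away the implicit abstractions at the head ($\Lambda A$, $\Lambda X$, $\Lambda P$, $\Lambda c_N$, $\Lambda c_C$), together with the implicit and type applications that clutter the body: the type application to the motive $(\lambda \_ : \name{List} \cdot A .\, P\, (\name{length} \cdot A \earg xs)\, (\name{l2vC} \cdot A \earg xs \cdot X \earg c_N \earg c_C))$, the implicit $\Lambda xs$ guarding the cons branch, and the two implicit arguments $\iarg{(\name{length} \cdot A \earg xs)}$ and $\iarg{(\name{l2vC} \cdot A \earg xs \cdot X \earg c_N \earg c_C)}$ applied to $p_c$. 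After this cleanup the residual body is $\fun{xs, p_n, p_c} \name{elimList} \earg xs \earg p_n \earg p_c$ (once the erased projection used to access the list's parametricity witness inside \name{elimList} is itself erased).

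Next, I would invoke \reflem{elimvec} specialized to lists, which tells us $\erase{\name{elimList}} = \lambda x.\,x$, and $\beta$-reduce to obtain $\fun{xs, p_n, p_c} xs \earg p_n \earg p_c$. A pair of $\eta$-contractions, on $p_c$ and then on $p_n$, collapses this to $\fun{xs}{xs}$, completing the proof.

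The only real obstacle is bookkeeping: one must be careful that all of the implicit/type annotations in the cons branch (including the indices $-(\name{length} \cdot A \earg xs)$ and $-(\name{l2vC} \cdot A \earg xs \cdot X \earg c_N \earg c_C)$) are indeed in erased positions, so that the branch applied to \name{elimList} erases to the bare variable $p_c$. Once that is verified, the derivation is essentially the same three-line calculation as in \reflem{l2vc}.
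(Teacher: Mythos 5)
Your proof is correct and is essentially the paper's own argument: the paper proves this lemma by noting it is the same as the proof of \reflem{l2vc} with $c_n$, $c_c$ renamed to $p_n$, $p_c$, which is exactly the calculation you carry out (erase implicits, apply \reflem{elimvec} for lists, $\beta$-reduce, then $\eta$-contract twice). Your extra bookkeeping about the erased motive and implicit index arguments is a more explicit rendering of the same steps, not a different route.
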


\begin{proof}
{\small
  Same as the proof of \reflem{l2vc},
  $\alpha$-renaming $c_n$ and $c_c$ to
  $p_n$ and $p_c$. \qed
}
\end{proof}

\begin{theorem}
\erase{\name{l2v}} is the identity function:
\labthm{l2v}
\end{theorem}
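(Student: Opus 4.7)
The plan is to mirror the proof of \refthm{v2l} almost verbatim, swapping in the dependent coercion components and using the appropriate ``identity'' lemmas already established for \texttt{l2v}. After applying \refasm{eraseimp} to discard the implicit abstraction over $A$ and the implicit application \texttt{-(length · A xs)}, the definition of \texttt{l2v} unfolds (by $\delta$) to $\fun{xs} \erase{\name{mkVec} \earg \pair{\name{l2vC} \earg xs}{\name{l2vP} \earg xs}}$. From there, the work reduces to showing that the intersection pair erases to $xs$ and that \name{mkVec} erases to the identity, both of which are already available.

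Concretely, I would proceed in four steps. First, focus on the left component of the pair and invoke \reflem{l2vc} to rewrite $\erase{\name{l2vC} \earg xs}$ to $xs$. Second, focus on the right component and invoke \reflem{l2vp} to rewrite $\erase{\name{l2vP} \earg xs}$ to $xs$. Third, observe that the two components are now $\alpha$-equal ($xs$ on both sides), so the dependent-intersection pair erases to its left component, yielding $\fun{xs} \erase{\name{mkVec} \earg xs}$. Fourth, apply \reflem{mkvec} to collapse this to $\fun{xs}{xs}$, which is the identity function up to $\alpha\beta\eta$-equality.

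I do not expect any real obstacle: the only subtlety is the type-checking justification (implicit in the theorem) that the intersection pair $[\name{l2vC} \earg xs, \name{l2vP} \earg xs]$ is well-formed, which requires both components to have the same erasure. This requirement is precisely what \reflem{l2vc} and \reflem{l2vp} deliver, and it is also what makes steps one and two in the erasure calculation line up so that step three is licensed. As in \refrem{reflect}, the reflection-theorem argument \texttt{-(l2vR · A xs)} sits in an erased position and so never appears in the calculation.
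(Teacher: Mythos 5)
Your proof is correct and follows exactly the same route as the paper's: the paper proves this theorem by repeating the proof of \refthm{v2l} with \name{l2vC}, \name{l2vP}, and \name{mkVec} substituted in, justified by \reflem{l2vc}, \reflem{l2vp}, and \reflem{mkvec} respectively. Your remarks on the well-formedness of the intersection pair and the erased reflection argument match the paper's own commentary following \refthm{v2l} and in \refrem{reflect}.
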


\begin{proof}
{\small
  Same as the proof of \refthm{v2l}, but erasing
  \name{l2vC} (instead of \name{v2lC})
  by \reflem{l2vc} in the first step,
  \name{l2vP} (instead of \name{v2lP})
  by \reflem{l2vp} in the second step, and
  \name{mkVec} (instead of \name{mkList})
  by \reflem{mkvec} in the final step. \qed
}
\end{proof}

\subsection{Program Reuse}
\labsec{vecreuse:progreuse}

We achieve program reuse by defining list append
(\texttt{appendL}) in terms of vector append (\texttt{appendV}), in the
standard way by applying \texttt{appendV} to the result of coercing
both arguments to vectors (using \texttt{l2v}), and coercing the
result of vector append to a list (using \texttt{v2l}).

\begin{verbatim}
appendL ◂ ∀ A : ★ . List · A ➔ List · A ➔ List · A
  = Λ A . λ xs . λ ys . v2l · A
  -(add (length · A xs) (length · A ys))
  (appendV · A -(length · A xs) (l2v · A xs)
    -(length · A ys) (l2v · A ys)) .
\end{verbatim}

The important property is that program reuse (i.e. the definition of
\texttt{appendL} in terms of \texttt{appendV}) incurs no runtime
penalty. We prove this below, showing that the erasure of our derived
list append is equal to the erasure of vector append, which relies on
\texttt{v2l} and \texttt{l2v} being identity coercions:

\begin{theorem}
\erase{\name{appendL}} is \erase{\name{appendV}}:
\labthm{appendl}
\end{theorem}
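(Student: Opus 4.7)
The plan is to show that $\erase{\name{appendL}}$ collapses to $\erase{\name{appendV}}$ by a chain of applications of the two identity-coercion theorems \refthm{v2l} and \refthm{l2v}, together with routine $\delta$, $\beta$, and $\eta$ steps. The structure of the proof will mirror the structure of the proofs of \refthm{v2l} and \refthm{l2v}: unfold the definition, push erasure inside, then replace each coerced subterm by its argument using the fact that the coercion erases to the identity.

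Concretely, the first step is to unfold $\name{appendL}$ via $\delta$ and, by \refasm{eraseimp}, discard all implicit abstractions and applications (the type arguments $\cdot A$, the implicit length arguments $\nega(\name{length} \earg \cdot A \earg xs)$, etc.). What remains under the two $\lambda$s is $\erase{\name{v2l} \earg (\name{appendV} \earg (\name{l2v} \earg xs) \earg (\name{l2v} \earg ys))}$, i.e.\ an application of the outer coercion $\name{v2l}$ to an application of $\name{appendV}$ to two inner coerced arguments. The second step applies \refthm{v2l} to rewrite $\erase{\name{v2l}}$ to $\lambda x.\,x$ and $\beta$-reduces the head, peeling off the outer coercion. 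The third step applies \refthm{l2v} twice, once at each inner occurrence, to collapse $\erase{\name{l2v} \earg xs}$ to $xs$ and $\erase{\name{l2v} \earg ys}$ to $ys$. At this point the body reads $\erase{\name{appendV}} \earg xs \earg ys$, and two successive $\eta$-contractions of the outer $\lambda xs$ and $\lambda ys$ deliver $\erase{\name{appendV}}$.

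There is no real obstacle here; the theorem is essentially a corollary of \refthm{v2l}, \refthm{l2v}, and the congruence of definitional equality under application. The only mild subtlety is keeping careful track of which positions are erased (so that the only residual structure after step one is the three-fold nested application above), and remembering to perform the $\beta$-reductions after each rewrite so that the next identity coercion becomes head-exposed. This matches the remark at the beginning of the paper: once the identity-coercion lemmas are in hand, program-reuse theorems like \refthm{appendl} really are trivial consequences of $\alpha\beta\eta$-equality on erasures, and are only written out to make the zero-cost nature of the construction concrete.
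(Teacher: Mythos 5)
Your proposal is correct and follows essentially the same route as the paper's proof: unfold \name{appendL}, discard erased arguments, replace each occurrence of \name{l2v} and \name{v2l} by the identity using \refthm{l2v} and \refthm{v2l}, $\beta$-reduce, and $\eta$-contract to $\erase{\name{appendV}}$. The only difference is that you eliminate the outer \name{v2l} before the inner \name{l2v} occurrences while the paper does the reverse, which is immaterial since the rewrites occur at disjoint positions.
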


\begin{proof}
{\small
\abovedisplayskip=-\baselineskip
\begin{align*}
  &=_\delta \fun{xs} \fun{ys} \erase{
    \name{v2l} \earg (
    \name{appendV} \earg
    \Erase{\name{l2v} \earg xs} \earg \Erase{\name{l2v} \earg ys}
    )
  }
  && \by{By \refthm{l2v}.}
  \\
  &=_\beta \fun{xs} \fun{ys} \erase{
    \name{v2l} \earg (
    \name{appendV} \earg xs \earg ys
    )
  }
  && \by{By \refthm{v2l}.}
  \\
  &=_\beta \fun{xs} \fun{ys}
    \erase{\name{appendV}} \earg xs \earg ys
  && \by{Contraction.}
  \\
  &=_\eta \erase{\name{appendV}}
  \eqed
\end{align*}}
\end{proof}

\subsection{Proof Reuse}
\labsec{vecreuse:proofreuse}

Proof reuse, proving that list append is associative
(\texttt{appendAssocL}) in terms of a proof that
vector append is associative (\texttt{appendAssocV}),
is even easier than program reuse. We derive
\texttt{appendAssocL} by applying
\texttt{appendAssocV} to the result of coercing each argument from a
list to a vector (using \texttt{l2v}). We do not need to coerce in the
other direction (using \texttt{v2l}), because our result is already an
equality type that erases to our goal.\footnote{Our proof reuse does
  not require congruence (\texttt{cong}), as used in the introduction
  \refsec{intro}, because converting two of our equality types
  (\texttt{≃}) only requires their erased terms to be equal, not their
  types.
  }

\begin{verbatim}
appendAssocL ◂ ∀ A : ★ .
  Π xs : List · A . Π ys : List · A . Π zs : List · A .
  appendL (appendL xs ys) zs ≃ appendL xs (appendL ys zs)
  = Λ A . λ xs . λ ys . λ zs . appendAssocV · A
    -(length · A xs) (l2v · A xs)
    -(length · A ys) (l2v · A ys)
    -(length · A zs) (l2v · A zs) .
\end{verbatim}

The result of reusing \texttt{appendAssocV} has the following type:
\begin{verbatim}
appendV (appendV (l2v xs) (l2v ys)) (l2v zs) ≃ 
  appendV (l2v xs) (appendV (l2v ys) (l2v zs))
\end{verbatim}

After erasure, this $\beta$-reduces to our goal because
\texttt{appendV} erases to \texttt{appendL} by
\refthm{appendl}, and \texttt{l2v} erases to (\texttt{λ x . x})
by \refthm{l2v}.
Without identity coercions in the derived program \texttt{appendL} and
derived proof \texttt{appendAssocL}, proof reuse would require equational
reasoning by appealing to the identity laws established by an
isomorphism between lists and vectors (as mentioned in the
introduction \refsec{intro}).

\section{Reusing List Definitions}
\labsec{listreuse}

In this section we demonstrate reuse in the other direction
(compared to \refsec{vecreuse}), reusing list programs and proofs to
define vector-versions of the programs and proofs.
This direction of reuse takes more effort, because we may want to write
functions over vectors with \textit{index constraints} in terms of functions
over lists without the constraints. Hence, we are required to prove
that the list-based reused definition implies the constraints we
explicitly state in the vector-based derived definition.

\subsection{Vectors as Length-Constrained Lists}
\labsec{listreuse:vecl}

The \texttt{v2l} function is ``lossy'' in the sense that the input
vector length does not appear in the list codomain.
In \refsec{listreuse:v2u}, we create a version of \texttt{v2l}
(named \texttt{v2u}) that ``remembers'' the index information, by
taking a vector to a list and a constraint on its length. Below, we
derive a new type (which will be the codomain of \texttt{v2u}),
named \texttt{VecL}, as the intersection of a list
and its length constraint.

\begin{verbatim}
VecL ◂ ★ ➔ Nat ➔ ★ = λ A : ★ . λ n : Nat .
  ι xs : List · A . n ≃ length · A xs .
\end{verbatim}

Because we use an intersection type, the erasure of \texttt{VecL} will
always be its left (\texttt{List}) component, so the additional
constraint does not get in the way of definitional equality
checking.

Below, we define the constructor helper function \texttt{mkVecL},
taking a list and an erased constraint to a \texttt{VecL}.
We rewrite by the proof of the constraint (\texttt{q}) in the right
component, so that we may return \texttt{β\{xs\}} as the right
component, allowing the erasure of the left and right sides to both be
\texttt{xs}.

\begin{verbatim}
mkVecL ◂ ∀ A : ★ . ∀ n : Nat . Π xs : List · A .
  (length · A xs ≃ n) ➾ VecL · A n =
  Λ A . Λ n . λ xs . Λ q . [ xs , ρ q - β{xs} ] .
\end{verbatim}

Just like \texttt{mkVec} in \refsec{ind:ind}, \texttt{mkVecL} is also
an identity coercion:

\begin{lemma}
\erase{\name{mkVecL}} is the identity function:
\lablem{mkvecl}
\end{lemma}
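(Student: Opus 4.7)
The plan is to observe that \name{mkVecL} has a term body that is syntactically identical to \name{mkVec} from \refsec{ind:ind}, namely $\lam{xs}{\all{q} \pair{xs}{\rewrite{q}{\refl{xs}}}}$, once the (irrelevant) implicit abstractions over $A$ and $n$ are stripped. The only differences between the two definitions are in the types: \name{mkVec} builds a dependent intersection packaging a Church/parametricity pair together with a reflection proof, whereas \name{mkVecL} builds a dependent intersection packaging a list together with a length-equality proof. Since erasure ignores all type information, the same erasure calculation should work verbatim.

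Concretely, I would proceed exactly as in the proof of \reflem{mkvec}. After applying \refasm{eraseimp} to discard the implicit abstractions on $A$, $n$, and $q$, the term reduces to $\lam{xs}{\pair{xs}{\rewrite{q}{\refl{xs}}}}$. The first erasure step discards the rewrite annotation $\rho\,q$, leaving $\pair{xs}{\refl{xs}}$ in the body. The second step uses the modified reflexive-equality introduction rule, whereby $\refl{xs}$ erases to its term argument $xs$, giving $\pair{xs}{xs}$. The third step observes that an intersection pair erases to its left component, producing $xs$. Wrapping back up, $\erase{\name{mkVecL}} = \lam{xs}{xs}$, the identity function.

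There is no substantive obstacle here: the proof is a line-for-line replay of \reflem{mkvec}, and the key well-formedness side-condition for the intersection pair (namely that both components erase to the same term $xs$) is precisely what motivates the rewrite $\rho\,q$ in the right component, just as it did for \name{mkVec}. Accordingly, I would simply write the proof as a three-step aligned calculation mirroring the one for \reflem{mkvec}, or alternatively give a one-line proof pointing the reader to that earlier proof.
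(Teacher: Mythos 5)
Your proposal is correct and matches the paper exactly: the paper's proof of \reflem{mkvecl} is literally ``Same as the proof of \reflem{mkvec},'' and your observation that the term body of \name{mkVecL} is syntactically identical to that of \name{mkVec} (so the erasure calculation replays verbatim) is precisely the justification for that one-liner.
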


\begin{proof}
{\small
Same as the proof of \reflem{mkvec}. \qed
}
\end{proof}

\subsection{Identity Coercion from Vec to VecL}
\labsec{listreuse:v2u}

\begin{figure}[t]
\centering
\begin{verbatim}
lengthPres ◂ ∀ A : ★ . ∀ n : Nat . Π xs : Vec · A n .
  n ≃ length · A (v2l · A -n xs)
\end{verbatim}
\caption{Length is preserved by coercion.}
\labfig{lenpres}
\end{figure}

Now we define \texttt{v2u}, taking a vector to a list and the constraint
that the length of the list is equal to the index of the vector
(by using \texttt{VecL} as the codomain of \texttt{v2u}). The function
\texttt{v2u} uses \texttt{mkVecL} to construct a \texttt{VecL} from a
vector by coercing to a list (via \texttt{v2l}), and proving the
constraint that \texttt{v2l} preserves the vector index length
w.r.t. the output list length
(via \texttt{lengthPres} in \reffig{lenpres}):

\begin{verbatim}
v2u ◂ ∀ A : ★ . ∀ n : Nat . Vec · A n ➔ VecL · A n
  = Λ A . Λ n . λ xs . mkVecL · A -n
  (v2l · A -n xs) -(ς (lengthPres · A -n xs)) .
\end{verbatim}

\begin{convention}
We include ``u'' in an identifier
to indicate that it relates to length-constrained lists.
\end{convention}

The function \texttt{v2u} is also an identity coercion, as it is
defined in terms of other identity coercions (\texttt{mkListL} and
\texttt{v2l}):

\begin{theorem}
\erase{\name{v2u}} is the identity function:
\labthm{v2u}
\end{theorem}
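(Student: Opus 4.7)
The plan is to proceed exactly as in the proofs of \refthm{v2l} and \reflem{mkvec}, by unfolding the definition of \name{v2u}, discarding all implicit material (by \refasm{eraseimp}), and then successively reducing the two identity coercions that appear in the body. After $\delta$-expansion and erasure of implicit abstractions and applications, the body of \name{v2u} becomes $\fun{xs} \erase{\name{mkVecL} \earg (\name{v2l} \earg xs)}$; in particular, the erased constraint argument $\iarg{(\varsigma \earg (\name{lengthPres} \earg xs))}$ contributes nothing, which is exactly the point of phrasing \name{v2u} in terms of \name{mkVecL} with an implicit proof of the length constraint.

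Next I would rewrite the inner occurrence of \name{v2l} using \refthm{v2l}, which replaces $\name{v2l} \earg xs$ with $xs$ up to $\beta$-equality, giving $\fun{xs} \erase{\name{mkVecL} \earg xs}$. Then I would apply \reflem{mkvecl} (which states that \erase{\name{mkVecL}} is the identity function) to collapse $\name{mkVecL} \earg xs$ to $xs$ up to $\beta$-equality, yielding $\fun{xs} xs$ after $\eta$-contraction.

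The proof is routine given the infrastructure already established; there is no real obstacle. The only thing worth emphasising is that, as in \refrem{reflect}, the erasure of \name{lengthPres} is irrelevant because it appears in an erased argument position, so we never have to reason about its term-level shape. Concretely, I would present the proof as:

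\begin{proof}
{\small
\abovedisplayskip=-\baselineskip
\begin{align*}
  &=_\delta \fun{xs} \erase{
    \name{mkVecL} \earg \Erase{\name{v2l} \earg xs}
  }
  && \by{By \refthm{v2l}.}
  \\
  &=_\beta \fun{xs} \erase{
    \name{mkVecL} \earg xs
  }
  && \by{By \reflem{mkvecl}.}
  \\
  &=_\beta \fun{xs}{xs}
  \eqed
\end{align*}}
\end{proof}
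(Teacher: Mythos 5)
Your proof is correct and is essentially identical to the paper's own proof: the same two-step reduction, first collapsing \(\name{v2l} \earg xs\) by \refthm{v2l} and then \(\name{mkVecL} \earg xs\) by \reflem{mkvecl}. (The paper's version writes \texttt{mkListL} where the definition of \texttt{v2u} actually uses \texttt{mkVecL}; your naming is the more faithful one.)
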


\begin{proof}
{\small
\abovedisplayskip=-\baselineskip
\begin{align*}
  &=_\delta \fun{xs} \erase{
    \name{mkListL} \earg
    \Erase{\name{v2l} \earg xs}
  }
  && \by{By \refthm{v2l}.}
  \\
  &=_\beta \fun{xs} \erase{
    \name{mkListL} \earg xs
  }
  && \by{By \reflem{mkvecl}.}
  \\
  &=_\beta \fun{xs}{xs}
  \eqed
\end{align*}}
\end{proof}

\subsection{Program Reuse}

We achieve program reuse by defining vector append
(\texttt{appendV}) in terms of list append (\texttt{appendL}). We must
coerce the output of \texttt{appendL} to a vector by \texttt{l2v},
and the arguments of \texttt{appendL} to lists by the first
projection of \texttt{v2u}. However, we must also perform rewrites to
ensure that \texttt{appendV} produces a vector whose length is the sum
of both input vectors (\texttt{add n m}):

\begin{verbatim}
appendV ◂ ∀ A : ★ .
  ∀ n : Nat . Vec · A n ➔
  ∀ m : Nat . Vec · A m ➔
  Vec · A (add n m)
  = Λ A . Λ n . λ xs . Λ m . λ ys .
  ρ (v2u · A -n xs).2 - 
  ρ (v2u · A -m ys).2 -
  ρ (lengthDistAppend · A (v2u · A -n xs).1 (v2u · A -m ys).1) -
  (l2v · A (appendL · A (v2u · A -n xs).1 (v2u · A -m ys).1)) .
\end{verbatim}

The result of reusing \texttt{appendL} has the following type:

\begin{verbatim}
Vec · A (length (appendL (v2u xs).1 (v2u  ys).1))
\end{verbatim}

\begin{figure}[t]
\centering
\begin{verbatim}
lengthDistAppend ◂ ∀ A : ★ . Π xs : List · A . Π ys : List · A .
  add (length · A  xs) (length · A ys) ≃ length (appendL · A xs ys)
\end{verbatim}
\caption{Length distributes through append.}
\labfig{lendistappend}
\end{figure}

We rewrite by the property
(\texttt{lengthDistAppend} in \reffig{lendistappend})
that length distributes through list
append via addition:

\begin{verbatim}
Vec · A (add (length (v2u xs).1) (length  (v2u ys).1))
\end{verbatim}

Rewriting by the length-constraints of both coerced lists, via
the second projection of \texttt{v2u} for \texttt{xs} and
\texttt{ys}, results in our goal type.

Reusing the program \texttt{appendL} to define \texttt{appendV} incurs
no runtime penalty:

\begin{theorem}
\erase{\name{appendV}} is \erase{\name{appendL}}:
\labthm{appendv}
\end{theorem}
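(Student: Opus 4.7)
The plan is to follow the same template as the proof of \refthm{appendl}, exploiting the fact that each wrapper around the reused call collapses after erasure. Unlike that proof, however, the body of \name{appendV} contains three nested rewrites (\texttt{ρ}) and several intersection-pair projections on the \name{v2u} applications, so I would first discharge all of these erasure steps before appealing to the coercion-identity theorems.

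By \refasm{eraseimp} the implicit abstractions and applications are already gone, so I would begin by repeatedly erasing the rewrites: each outer \texttt{ρ}-term erases to its body (as used in the proof of \reflem{mkvec}), so the three rewrites by \texttt{(v2u xs).2}, \texttt{(v2u ys).2}, and \name{lengthDistAppend} vanish in succession, leaving \name{l2v} applied to \name{appendL} on the two first-projected \name{v2u} arguments. Next I would erase the \texttt{.1} projections --- each projection on an intersection pair erases to its underlying term --- yielding the term \texttt{l2v (appendL (v2u xs) (v2u ys))}.

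From here the argument mirrors \refthm{appendl} step for step. Applying \refthm{v2u} twice replaces each \name{v2u} application by its argument, and then applying \refthm{l2v} reduces the outermost \name{l2v} to the identity, after which $\beta$-reduction leaves \erase{\name{appendL}} applied to $xs$ and $ys$. A final $\eta$-contraction over $xs$ and $ys$ yields \erase{\name{appendL}}, as required.

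The main obstacle is purely bookkeeping: keeping straight the order in which to peel off the three rewrites and the \texttt{.1} projections, and confirming that all of these subterms sit in positions that really are erased. Both points follow mechanically from the erasure rules for \texttt{ρ} and for dependent intersection pairs recalled in \refsec{ind:ind}, so no new ideas are needed beyond those already used in the proofs of \refthm{appendl} and \refthm{v2u}.
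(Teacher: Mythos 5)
Your proposal is correct and follows essentially the same route as the paper's proof: erase the three rewrites and the intersection projections first, then replay the argument of \refthm{appendl} with \name{v2u} (via \refthm{v2u}) in place of the argument coercions and \name{l2v} (via \refthm{l2v}) for the outer coercion, finishing with $\beta$-reduction and $\eta$-contraction. No gaps.
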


\begin{proof}
{\small
  Erase rewrites and projections, then same as the proof of
  \refthm{appendl}, exchanging \texttt{appendL} for \texttt{appendV},
  and \texttt{v2u} (erased by \refthm{v2u}) for \texttt{v2l}. \qed
}
\end{proof}

\subsection{Proof Reuse}

We achieve proof reuse by proving that vector append is associative
(\texttt{appendAssocV}) in terms of a proof that
list append is associative (\texttt{appendAssocL}).
Once again, this is easier than program reuse, as we must only
coerce the arguments to lists (using \texttt{v2l}), but must not
coerce the result (using \texttt{l2v}) because it is already an
equality type:

\begin{verbatim}
appendAssocV ◂ ∀ A : ★ .
  ∀ n : Nat . Π xs : Vec · A n .
  ∀ m : Nat . Π ys : Vec · A m .
  ∀ o : Nat . Π zs : Vec · A o .
  appendV (appendV xs ys) zs ≃ appendV xs (appendV ys zs)
  = Λ A . Λ n . λ xs . Λ m . λ ys . Λ o . λ zs .
  appendAssocL · A (v2l · A -n xs) (v2l · A -m ys) (v2l · A -o zs) .
\end{verbatim}

The result of reusing \texttt{appendAssocL} has the following type:
\begin{verbatim}
appendL (appendL (l2v xs) (l2v ys)) (l2v zs) ≃ 
  appendL (l2v xs) (appendL (l2v ys) (l2v zs))
\end{verbatim}

After erasure, this $\beta$-reduces to our goal because
\texttt{appendL} erases to \texttt{appendV} by
\refthm{appendv}, and \texttt{l2v} erases to (\texttt{λ x . x})
by \refthm{l2v}. Again, proof reuse is zero-cost as no equational
reasoning needs to be performed.

\begin{remark}
Note that in the definition of both \texttt{appendV} and
\texttt{appendAssocV}, we could exchange \texttt{(v2l xs)} for
\texttt{(v2u xs).1} and lemma \texttt{(lengthPres xs)} for
\texttt{(v2u xs).2}, and vice versa, because the latter term in both pairs
erases to the former term.
\end{remark}

\section{Reusing Nested List Definitions}
\labsec{nestreuse}

In this section we demonstrate reuse for nested datatypes,
reusing programs and proofs over lists of lists
(\texttt{List · (List · A)}) to define programs and proofs
over vectors of vectors (\texttt{Vec · (Vec · A n) m}).
Like in \refsec{listreuse}, such reuse requires proving properties
about the lists to satisfy the vector length requirements of
the derived definitions.

\subsection{List Map}
\labsec{nestreuse:mapl}

To reuse a list of lists as a vector of vectors, we must be able to
coerce the inner lists in addition to the outer list. This can be
achieved by mapping \texttt{v2l} over the inner lists. However, to
ensure that reused definitions are identity coercions, it is crucial
that we define list map (\texttt{mapL}) in Barras and Bernardo's style
of eliminating the input list at the abstract type, and using the
abstract constructors, of the output list
(as in \refsec{ind:church}). We define \texttt{mapL} in
abstract-elimination style for inductive lists in terms of our
familiar 3 components (\texttt{mapCL}, \texttt{mapPL}, and
\texttt{mapRL}).

\paragraph{Church-Encoding}

The Church-component of map eliminates the Church-encoded input list
(via projection \texttt{xs.1.1}) at abstract type \texttt{X}, using
abstract constructors \texttt{cN} and \texttt{cC}. We define the head
of our mapped list to be \texttt{(f x)} in
the abstract cons case (\texttt{cC}):

\begin{verbatim}
mapCL ◂ ∀ A : ★ . ∀ B : ★ .  Π f : A ➔ B . List · A ➔ ListC · B
  = Λ A . Λ B . λ f . λ xs . Λ X . λ cN . λ cC . 
  xs.1.1 · X cN (λ x . cC (f x)) .
\end{verbatim}

Even though we have defined \texttt{mapCL} in abstract elimination
style, it is \textit{not} an identity coercion (we know nothing about
the input function \texttt{f}, which may change the elements of the
list). However, if we partially apply \texttt{mapCL} to the identity
function, then the result \textit{is} an identity coercion:

\begin{lemma}
$\erase{\name{mapCL}} \earg (\fun{x}{x})$ is the identity function:
\lablem{mapcl}
\end{lemma}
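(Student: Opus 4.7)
The plan is to unfold the definition of \name{mapCL}, erase all of the administrative structure (implicit abstractions, implicit applications, type abstractions/applications, and the intersection projections on \texttt{xs.1.1}), then apply the resulting untyped term to $\fun{x}{x}$ and simplify by $\beta$-reduction followed by repeated $\eta$-contraction.

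More concretely, I would first compute $\erase{\name{mapCL}}$. By \refasm{eraseimp}, all the implicit abstractions $\Lambda A, \Lambda B, \Lambda X$ and the implicit/type applications disappear, so we are left with $\fun{f, xs, c_n, c_c} \erase{xs.1.1} \earg c_n \earg (\fun{x} c_c \earg (f \earg x))$. Erasing the projection $xs.1.1$ (which, as in \reflem{mkvec} and \reflem{elimvec}, simply reduces to its underlying term) gives $\erase{\name{mapCL}} = \fun{f, xs, c_n, c_c} xs \earg c_n \earg (\fun{x} c_c \earg (f \earg x))$.

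Next, I would apply this to the identity $\fun{x}{x}$ and $\beta$-reduce the subterm $(\fun{x}{x}) \earg x$ to $x$, yielding $\fun{xs, c_n, c_c} xs \earg c_n \earg (\fun{x} c_c \earg x)$. Then $\eta$-contracting the innermost abstraction collapses $(\fun{x} c_c \earg x)$ to $c_c$, producing $\fun{xs, c_n, c_c} xs \earg c_n \earg c_c$, and two further $\eta$-contractions on $c_c$ and $c_n$ give $\fun{xs}{xs}$, as required.

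This proof is entirely mechanical, mirroring the pattern of \reflem{v2lc} and \reflem{l2vc} with the extra initial step of $\beta$-reducing the identity application under the binder. The only minor subtlety worth noting is the order of operations: the $\beta$-step must be performed before the $\eta$-contractions, since it is precisely the $\beta$-reduction that eliminates $f$ from under the $\lambda x$ and enables the innermost $\eta$-contraction. I do not anticipate any real obstacle; the lemma is a direct consequence of definitional equality.
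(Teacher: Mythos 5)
Your proposal is correct and follows essentially the same route as the paper's proof: erase the implicit structure and the projection \texttt{xs.1.1}, $\beta$-reduce the application to $\fun{x}{x}$ under the binders, then $\eta$-contract three times (first the inner $\fun{x} c_c \earg x$, then $c_c$ and $c_n$) to reach $\fun{xs}{xs}$. Your remark that the $\beta$-step must precede the $\eta$-contractions matches the ordering in the paper's derivation.
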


\begin{proof}
{\small
\begin{align*}
  &=_\delta (\fun{f} \fun{xs} \fun{c_n} \fun{c_c} \erase{
    xs.1.1 \earg c_n \earg (\fun{x} c_c \earg (f \earg x))
  }) \earg (\fun{x}{x})
  && \by{Erase projection.}
  \\
  &=~ (\fun{f} \fun{xs} \fun{c_n} \fun{c_c}
    \earg c_n \earg (\fun{x} c_c \earg (f \earg x))
  ) \earg (\fun{x}{x})
  && \by{Reduce.}
  \\
  &=_\beta \fun{xs} \fun{c_n} \fun{c_c}
    \earg c_n \earg (\fun{x} c_c \earg x)
  && \by{Contract.}
  \\
  &=_\eta \fun{xs} (\fun{c_n} \fun{c_c}
    xs \earg c_n \earg c_c)
  && \by{Contract.}
  \\
  &=_\eta \fun{xs}{xs}
  \eqed
\end{align*}}
\end{proof}

\paragraph{Parametricity Theorem}

The parametricity-component of map eliminates the parametricity theorem input
(via projection \texttt{xs.1.2}) at abstract type \texttt{X} and
abstract motive \texttt{P}, and using abstract constructors \texttt{cN} and
\texttt{cC}, and abstract parametricity branches \texttt{pN} and
\texttt{pC}. This time we use \texttt{(f x)} for the head position of
the cons branch of the parametricity theorem (\texttt{pC}):

\begin{verbatim}
mapPL ◂ ∀ A : ★ . ∀ B : ★ .
  Π f : A ➔ B . Π xs : List · A .
  ListP · B (mapCL · A · B f xs)
  = Λ A . Λ B . λ f . λ xs .
  Λ X . Λ P . Λ cN . Λ cC . λ pN . λ pC .
  xs.1.2 · X · P -cN -(λ x . cC (f x))
    pN (Λ xsC . λ x . pC -xsC (f x)) .
\end{verbatim}

The partial application of the parametricity-component of map to the
identity function is likewise an identity coercion:

\begin{lemma}
$\erase{\name{mapPL}} \earg (\fun{x}{x})$ is the identity function:
\lablem{mappl}
\end{lemma}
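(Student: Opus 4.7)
The plan is to follow the proof of \reflem{mapcl} almost verbatim, with two minor adaptations: we project \texttt{xs.1.2} rather than \texttt{xs.1.1} in the first step, and we $\alpha$-rename the abstract constructor variables $c_n, c_c$ to the abstract parametricity branch variables $p_n, p_c$. Everything else lines up because \refasm{eraseimp} strips away all of the extra implicit quantifiers and implicit arguments that distinguish \texttt{mapPL} from \texttt{mapCL} (namely $\Lambda X$, $\Lambda P$, $\Lambda c_N$, $\Lambda c_C$, the implicit $-c_N$, $-(\lam{x}{c_C \earg (f \earg x)})$, $\Lambda xs_C$, and $-xs_C$), leaving exactly the same untyped skeleton up to variable naming.

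Concretely, I would start from $\erase{\name{mapPL}} \earg (\fun{x}{x})$ unfolded by $\delta$, erase the projection \texttt{xs.1.2} to $xs$, reduce the outermost redex by substituting $(\fun{x}{x})$ for $f$, contract the inner $\eta$-redex $\fun{x} p_c \earg ((\fun{x}{x}) \earg x)$ to $p_c$, and then contract the two outer $\eta$-redexes to collapse the whole thing to $\fun{xs}{xs}$. This mirrors the four-step chain in \reflem{mapcl} exactly.

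The main (and only) obstacle to verify is the first step: that erasing the projection \texttt{xs.1.2} yields $xs$. But this is guaranteed by the same projection-erasure convention used throughout \refsec{ind:ind}, since dependent intersection projections are erased. After that, the remaining $\beta$- and $\eta$-steps are syntactically identical to those in \reflem{mapcl}, so appealing to that proof (with the $\alpha$-renaming noted above) is both rigorous and space-efficient.
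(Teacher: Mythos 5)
Your proof is correct and matches the paper's own argument, which likewise just replays the proof of \reflem{mapcl} with the projection $xs.1.2$ in place of $xs.1.1$ and an $\alpha$-renaming between the constructor and parametricity-branch variables. The erasure of \texttt{mapPL}'s extra implicit abstractions and applications that you identify is exactly why the two proofs coincide step for step.
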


\begin{proof}
{\small
  Same as the proof of \reflem{mapcl}, exchanging
  $xs.1.2$ for $xs.1.1$, and $\alpha$-renaming
  $p_n$ and $p_c$ to $c_n$ and $c_c$. \qed
}
\end{proof}

\paragraph{Reflection Theorem}

The reflection theorem component of the map is defined by a simple
induction (using \texttt{elimList}) on the input list, and reusing the
reflection theorem proof of the input:

\begin{verbatim}
mapRL ◂ ∀ A : ★ . ∀ B : ★ .  Π f : A ➔ B . Π xs : List · A .
  ListR · B (mapCL · A · B f xs)
  = Λ A . Λ B . λ f . λ xs . elimList · A xs ·
  (λ xs : List · A . ListR · B (mapCL · A · B f xs))
  β
  (Λ xs . λ x . λ ih . ρ+ ih - β) . 
\end{verbatim}

\begin{remark}
Neither \texttt{mapRL}, nor its partial application to the identity
function, results in an identity coercion. This is not important, as
explained in \refrem{reflect}, because we will only use it in an
erased position in the definition of \texttt{mapL} (as an implicit
argument to \texttt{mkList}).
\end{remark}

\paragraph{List Map}

Finally, we define \texttt{mapL} for inductive lists in the usual way,
by applying the constructor helper function \texttt{mkList} to the
intersection pair of the Church and parametricity components, and
the reflection component:

\begin{verbatim}
mapL ◂ ∀ A : ★ . ∀ B : ★ .  Π f : A ➔ B . List · A ➔ List · B
  = Λ A . Λ B . λ f . λ xs . mkList · B
  [ mapCL · A · B f xs , mapPL · A · B f xs ]
  -(mapRL · A · B f xs) .
\end{verbatim}

Because \texttt{mkList} is an identity coercion, as are the components
\texttt{mapCL} and \texttt{mapPL}, the partial application of
\texttt{mapL} to the identity function is also an identity coercion:

\begin{theorem}
$\erase{\name{mapL}} \earg (\fun{x}{x})$ is the identity function:
\labthm{mapl}
\end{theorem}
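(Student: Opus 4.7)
The plan is to mirror the proof of \refthm{v2l} (and \refthm{l2v}) almost step for step, since \name{mapL} is defined via \name{mkList} applied to the intersection pair of its Church-component (\name{mapCL}) and parametricity-component (\name{mapPL}), together with the reflection-component \name{mapRL} as an erased argument. After $\delta$-unfolding \name{mapL} and using \refasm{eraseimp} to discard implicit abstractions and applications, what remains is $\fun{f} \fun{xs} \name{mkList} \earg \pair{\name{mapCL} \earg f \earg xs}{\name{mapPL} \earg f \earg xs}$, so the reflection-component disappears (as anticipated in \refrem{reflect}) and we only need to deal with the two visible components.

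First I would $\beta$-reduce the partial application to $(\fun{x}{x})$, substituting the identity for $f$ inside both components of the intersection pair. Then, invoking \reflem{mapcl}, the left component $\name{mapCL} \earg (\fun{x}{x}) \earg xs$ reduces (up to $\beta\eta$) to $xs$; symmetrically, by \reflem{mappl}, the right component also reduces to $xs$. As in the proof of \refthm{v2l}, these two erasure steps simultaneously exhibit the well-typedness of the intersection pair (both components erase to the same term) and collapse the pair to $\pair{xs}{xs}$, which itself erases to $xs$.

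Finally, I would apply \reflem{mkvec} in its list incarnation (\name{mkList} is an identity coercion just as \name{mkVec} is) to reduce $\name{mkList} \earg xs$ to $xs$, yielding $\fun{xs}{xs}$ as required.

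I do not expect a serious obstacle: as noted in the remark at the start of the paper, such statements are trivial consequences of definitional equality, and the overall shape of the argument is dictated by the shape of \refthm{v2l}. The only novelty worth flagging is that \reflem{mapcl} and \reflem{mappl} are stated about the \emph{partial application} of \name{mapCL} and \name{mapPL} to the identity, rather than about these maps themselves; consequently I must perform the initial $\beta$-step that substitutes $(\fun{x}{x})$ for $f$ \emph{before} erasing the two components of the pair, rather than after. Once that ordering is respected the calculation is entirely routine.
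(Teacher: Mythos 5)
Your proposal matches the paper's proof step for step: $\delta$-unfold \name{mapL}, $\beta$-reduce the partial application to $(\fun{x}{x})$ first, then erase the two pair components via \reflem{mapcl} and \reflem{mappl}, erase the pair, and finish with \reflem{mkvec} for \name{mkList}. The ordering subtlety you flag (substituting the identity for $f$ before invoking the component lemmas) is exactly how the paper sequences its calculation, so the argument is correct and essentially identical.
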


\begin{proof}
{\small
\begin{align*}
  &=_\delta (\fun{f} \fun{xs} \erase{
    \name{mkList} \earg
    \pair{
      \name{mapCL} \earg f \earg xs
    }{
      \name{mapPL} \earg f \earg xs
    }
  }) \earg (\fun{x}{x})
  && \by{Reduce.}
  \\
  &=_\beta \fun{xs} \erase{
    \name{mkList} \earg
    \pair{
      \Erase{\name{mapCL}} \earg (\fun{x}{x}) \earg xs
    }{
      \name{mapPL} \earg (\fun{x}{x}) \earg xs
    }
  }
  && \by{By \reflem{mapcl}.}
  \\
  &=_\beta \fun{xs} \erase{
    \name{mkList} \earg
    \pair{
      xs
    }{
      \Erase{\name{mapPL}} \earg (\fun{x}{x}) \earg xs
    }
  }
  && \by{By \reflem{mappl}.}
  \\
  &=_\beta \fun{xs} \erase{
    \name{mkList} \earg
    \pair{
      xs
    }{
      xs
    }
  }
  && \by{Erase pair.}
  \\
  &=~ \fun{xs} \erase{
    \name{mkList} \earg xs
  }
  && \by{By \reflem{mkvec}.}
  \\
  &=_\beta \fun{xs}{xs}
  \eqed
\end{align*}}
\end{proof}


\subsection{Nested Identity Coercions}

In order to reuse a program over nested lists to derive a program
over nested vectors, we must coerce the nested vectors input of the
derived program to nested lists. Below, we define \texttt{v2l-v2l}
to perform such a coercion between nested datatypes.

\begin{verbatim}
v2l-v2l ◂ ∀ A : ★ . ∀ n : Nat . ∀ m : Nat .
  Vec · (Vec · A n) m ➔ List · (List · A)
  = Λ A . Λ n . Λ m . λ xss . mapL · (Vec · A n) · (List · A) 
   (v2l · A -n) (v2l · (Vec · A n) -m xss) .
\end{verbatim}

Unsurprisingly, we define \texttt{v2l-v2l} by mapping \texttt{v2l}
(using \texttt{mapL}) over the result of coercing the outer vector to a
list (again via \texttt{v2l}). However, we now have an instance of a
\texttt{mapL} applied to an identity coercion (\texttt{v2l}), which
allows \texttt{v2l-v2l} to be an identity coercion between nested types:

\begin{theorem}
\erase{\name{v2l-v2l}} is the identity function:
\labthm{v2lv2l}
\end{theorem}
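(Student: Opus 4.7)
The plan is to proceed by the same style of erasure-chasing proof used for the other identity-coercion theorems (e.g.\ \refthm{v2l} and \refthm{mapl}), reducing \erase{\name{v2l-v2l}} step by step until it becomes $\fun{xss}{xss}$. The key ingredients are already stated: \refthm{v2l} tells us that \erase{\name{v2l}} is the identity, and \refthm{mapl} tells us that the partial application $\erase{\name{mapL}} \earg (\fun{x}{x})$ is the identity. Since the body of \texttt{v2l-v2l} is precisely \name{mapL} applied to \name{v2l} applied to the result of \name{v2l} on the input, these two facts are exactly what we need.

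First I would $\delta$-unfold the definition of \name{v2l-v2l}, noting that all type applications and the implicit abstractions/applications over $A$, $n$, and $m$ are erased (per \refasm{eraseimp}). That leaves
$\erase{\name{v2l-v2l}} = \fun{xss} \erase{\name{mapL} \earg \name{v2l} \earg (\name{v2l} \earg xss)}$.
Next I would rewrite the inner application $\erase{\name{v2l} \earg xss}$ using \refthm{v2l} (up to $\beta$-reduction) to $xss$, and independently rewrite the first argument \erase{\name{v2l}} passed to \name{mapL} to $(\fun{x}{x})$, again by \refthm{v2l}. At that point the expression reads $\fun{xss} \erase{(\name{mapL} \earg (\fun{x}{x})) \earg xss}$.

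Then I would apply \refthm{mapl} to replace $\erase{\name{mapL}} \earg (\fun{x}{x})$ with $(\fun{x}{x})$, yielding $\fun{xss} ((\fun{x}{x}) \earg xss)$, which $\beta$-contracts (or equivalently $\eta$-contracts after a final outer reduction) to $\fun{xss}{xss}$. Structurally this is the same ``focus erasure on subterms'' style proof as \refthm{v2l}, except that the two appeals to \refthm{v2l} and the one appeal to \refthm{mapl} compose across a nested application.

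I do not expect any real obstacle: the theorem is, as the opening remark of the paper emphasizes, a trivial consequence of definitional equality, and the only subtle point is bookkeeping. The one thing to be careful about is the order of erasure — the outer \name{v2l} must be erased before (or alongside) the \name{mapL}-partial-application step, because \refthm{mapl} is stated about $\erase{\name{mapL}} \earg (\fun{x}{x})$ rather than about $\erase{\name{mapL}} \earg \erase{\name{v2l}}$. Using the large-pipe focus notation introduced in the paper makes this entirely routine.
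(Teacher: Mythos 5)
Your proposal matches the paper's proof: both $\delta$-unfold \texttt{v2l-v2l}, use \refthm{v2l} to erase the \name{v2l} argument of \name{mapL} to $(\fun{x}{x})$, invoke \refthm{mapl} on the resulting partial application, and finish with one more appeal to \refthm{v2l} on the inner coercion followed by $\beta$-contraction. The only difference is that you erase the inner $\name{v2l} \earg xss$ slightly earlier than the paper does, which is an immaterial reordering of independent steps.
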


\begin{proof}
{\small
\abovedisplayskip=-\baselineskip
\begin{align*}
  &=_\delta \fun{xss} \erase{
    \name{mapL} \earg \Erase{\name{v2l}} \earg (\name{v2l} \earg xss)
  }
  && \by{By \refthm{v2l}.}
  \\
  &=~ \fun{xss} \erase{
    \Erase{\name{mapL}} \earg (\fun{x}{x}) \earg (\name{v2l} \earg xss)
  }
  && \by{By \refthm{mapl}.}
  \\
  &=_\beta \fun{xss} \erase{
    \name{v2l} \earg xss
  }
  && \by{By \refthm{v2l}.}
  \\
  &=_\beta \fun{xss}{xss}
  \eqed
\end{align*}}
\end{proof}

Because we plan on reusing an unindexed function over nested lists to
define an indexed function over nested vectors, we will need to
remember the length constraints on coerced nested lists
(like in \refsec{listreuse}). Thus, we also define the
nested mapping function \texttt{v2u-v2l}, which maps the outer vector
to a list, but remembers the inner list length constraints by mapping
the inner vectors to length-constrained lists (\texttt{VecL}):

\begin{verbatim}
v2u-v2l ◂ ∀ A : ★ . ∀ n : Nat . ∀ m : Nat .
  Vec · (Vec · A n) m ➔ List · (VecL · A n)
  = Λ A . Λ n . Λ m . λ xss .
  (mapL · (Vec · A n) · (VecL · A n) (v2u · A -n)
    (v2l · (Vec · A n) -m xss)) .
\end{verbatim}

Just like \texttt{v2l-v2l}, \texttt{v2u-v2l} is also an identity
coercion:

\begin{theorem}
\erase{\name{v2u-v2l}} is the identity function:
\labthm{v2uv2l}
\end{theorem}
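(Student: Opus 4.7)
The plan is to follow the proof of \refthm{v2lv2l} almost verbatim, with one substitution: the inner coercion is now \name{v2u} rather than \name{v2l}. Since \name{v2u-v2l} is structurally identical to \name{v2l-v2l} except that the function being mapped over the inner lists is \name{v2u} (erasing to the identity by \refthm{v2u}) instead of \name{v2l} (erasing to the identity by \refthm{v2l}), every step of the earlier argument goes through after this one-symbol replacement.

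Concretely, I would first unfold the definition and erase implicit abstractions and type applications (by \refasm{eraseimp}), leaving $\fun{xss} \erase{\name{mapL} \earg \name{v2u} \earg (\name{v2l} \earg xss)}$. Next, I would rewrite the inner \name{v2u} to $\fun{x}{x}$ using \refthm{v2u}, putting the term into the shape $\erase{\name{mapL} \earg (\fun{x}{x}) \earg (\name{v2l} \earg xss)}$ expected by \refthm{mapl}. Applying \refthm{mapl} collapses the partial application of \name{mapL} to the identity function, yielding $\fun{xss} \erase{\name{v2l} \earg xss}$. Finally, \refthm{v2l} rewrites the remaining \name{v2l} to the identity, and $\beta$-contraction finishes the proof.

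Since each step cites a theorem that has already been established and the chain of rewrites is entirely mechanical, there is no genuine obstacle here; the only thing to be careful about is matching the erasure layering with the large-pipe notation so that the reader can see which subterm is being simplified at each step. Alternatively, the proof could simply be stated as: ``Same as the proof of \refthm{v2lv2l}, but erasing \name{v2u} (instead of the inner \name{v2l}) by \refthm{v2u} in the first step.''
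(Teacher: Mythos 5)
Your proposal is correct and matches the paper's proof exactly: the paper likewise proves this by following the proof of \refthm{v2lv2l} verbatim, erasing \name{v2u} (instead of \name{v2l}) by \refthm{v2u} in the first step. The fully spelled-out chain you give is just an expansion of that same argument.
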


\begin{proof}
{\small
  Same as the proof of \refthm{v2lv2l}, but erasing
  \name{v2u} (instead of \name{v2l})
  by \refthm{v2u} in the first step. \qed
}
\end{proof}

\subsection{Identity Coercion from VecL to List}

If we have a length-constrained list (\texttt{VecL}), we can retrieve
the inner list as the first projection of intersection:

\begin{verbatim}
u2l ◂ ∀ A : ★ . ∀ n : Nat . VecL · A n ➔ List · A
  = Λ A . Λ n . λ xs . xs.1 .
\end{verbatim}

The nice thing about length-constrained lists is that they erase to
their list component, preventing the constraint from interfering with
definitional equalities. Similarly, the projection of the list from
the length-constrained list is an identity coercion, preventing the
constraint from incurring runtime overhead:

\begin{lemma}
\erase{\name{u2l}} is the identity function:
\lablem{u2l}
\end{lemma}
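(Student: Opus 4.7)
The plan is essentially to mirror the very first step of several earlier proofs (e.g.\ \reflem{v2lc}, \reflem{l2vc}, \reflem{elimvec}), since \texttt{u2l} is nothing but a projection wrapped in erased abstractions.

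First, by \refasm{eraseimp}, the implicit abstractions $\Lambda A$ and $\Lambda n$ have already been discarded, so $\delta$-expanding \texttt{u2l} leaves $\fun{xs} \erase{xs.1}$. Second, I would invoke the erasure rule for intersection projections: recall that an intersection pair $\pair{t}{t'}$ erases to $\erase{t}$, and hence the projection $xs.1$ erases to $xs$ itself (the same step used implicitly in the erasure of \texttt{xs.1.1}, \texttt{xs.1.2}, and \texttt{xs.2} throughout the paper). After this single erasure we are left with $\fun{xs}{xs}$, which is the identity.

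So the full proof is just two lines:
\begin{align*}
  &=_\delta \fun{xs} \erase{xs.1}
  && \by{Erase projection.}
  \\
  &=~ \fun{xs}{xs}
  \eqed
\end{align*}

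There is no real obstacle here: \texttt{u2l} is definitionally the first projection, and projections are erased in our theory. The lemma is essentially a trivial corollary of how intersection elimination erases, analogous to the first steps of \reflem{v2lc} and \reflem{l2vc}, but without even needing an $\eta$-contraction afterward because the projection disappears directly into its subject.
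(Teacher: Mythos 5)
Your proof is correct and matches the paper's, which simply says ``Erase projection''; you have just spelled out the $\delta$-expansion and the intersection-projection erasure rule explicitly. No further comment is needed.
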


\begin{proof}
{\small
Erase projection. \qed
}
\end{proof}

We can use \texttt{v2u-v2l} to coerce a vector of vectors to a list of
length-constrained lists, allowing us to rewrite by the constraints to
prove that derived vector programs have the appropriate
indices. However, ultimately we want to reuse a nested list program,
so we also a define \texttt{u2l-l2l} to project away the constraints
of the inner length-constrained lists:

\begin{verbatim}
u2l-l2l ◂ ∀ A : ★ . ∀ n : Nat .
  List · (VecL · A n) ➔ List · (List · A)
  = Λ A . Λ n . λ xss .
  mapL · (VecL · A n) · (List · A) (u2l · A -n) xss .
\end{verbatim}

Once again, this results in a nested identity coercion:

\begin{lemma}
\erase{\name{u2l-l2l}} is the identity function:
\lablem{u2ll2l}
\end{lemma}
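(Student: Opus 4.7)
The plan is to follow the same pattern as in the proofs of \refthm{v2lv2l} and \refthm{v2uv2l}, only simpler, since here there is no outer coercion to deal with---just a single \texttt{mapL} applied to an identity coercion. After $\delta$-unfolding \name{u2l-l2l} and using \refasm{eraseimp} to drop implicit abstractions and applications, the erasure reduces to $\fun{xss} \erase{\name{mapL} \earg \name{u2l} \earg xss}$.

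From there, the argument is two steps. First I would apply \reflem{u2l}, which rewrites \name{u2l} (appearing in an explicit position) to $\fun{x}{x}$, turning the body into $\name{mapL} \earg (\fun{x}{x}) \earg xss$. Second, I would invoke \refthm{mapl}, which states that $\erase{\name{mapL}} \earg (\fun{x}{x})$ is the identity function. This $\beta$-contracts the body to $xss$, yielding $\fun{xss}{xss}$ as required.

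No step is genuinely difficult: the whole argument is a direct consequence of previously established identity-coercion facts about \name{u2l} and \name{mapL} partially applied to the identity, combined with erasure of implicit material. The only point worth flagging is that \refthm{mapl} is stated specifically for the partial application of \name{mapL} to the identity function, so the rewrite by \reflem{u2l} in the first step is what makes the theorem applicable in the second step. Once that is in place, the proof reduces to a short computation essentially identical in shape to the initial two steps of the proof of \refthm{v2lv2l}.
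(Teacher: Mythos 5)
Your proof is correct and matches the paper's argument exactly: unfold \name{u2l-l2l}, erase \name{u2l} to the identity by \reflem{u2l}, then conclude by \refthm{mapl} that \name{mapL} applied to the identity $\beta$-reduces away. Nothing further is needed.
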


\begin{proof}
{\small
\abovedisplayskip=-\baselineskip
\begin{align*}
  &=_\delta \fun{xss} \erase{
    \name{mapL} \earg \Erase{\name{u2l}} \earg xss
  }
  && \by{By \reflem{u2l}.}
  \\
  &=~ \fun{xss} \erase{
    \name{mapL} \earg (\fun{x}{x}) \earg xss
  }
  && \by{By \refthm{mapl}.}
  \\
  &=_\beta \fun{xss}{xss}
  \eqed
\end{align*}}
\end{proof}

\subsection{Program Reuse}
\labsec{nestreuse:progreuse}

\begin{figure}[t]
\centering
\begin{verbatim}
lengthDistConcat ◂ ∀ A : ★ . ∀ n : Nat . Π xss : List · (VecL · A n) .
  mult (length · (VecL · A n) xss) n ≃ 
  length (concatL · A (u2l-l2l · A -n xss))
\end{verbatim}
\caption{Length distributes through concat.}
\labfig{lendistconcat}
\end{figure}

Now we reuse a list concatenation program
(\texttt{concatL}, flattening a list of lists to a list)
to derive a vector concatenation program (\texttt{concatV}).
Once again, we coerce the result of reusing
\texttt{concatL} to a vector (using \texttt{l2v}). However,
this time we reuse \texttt{concatL} by applying it to the result of
mapping the input vector of vectors to a list of lists
(via \texttt{v2l-v2l}). Vector concatenation (\texttt{concatV})
requires the index of the resulting vector to equal the product of the
outer and inner input vector lengths (\texttt{mult m n}), thus we must
also perform rewrites to ensure that our reused list program
(\texttt{concatL}) respects this indexing requirement.

\begin{verbatim}
concatV ◂ ∀ A : ★ . ∀ n : Nat . ∀ m : Nat .
  Vec · (Vec · A n) m ➔ Vec · A (mult m n)
  = Λ A . Λ n . Λ m . λ xss .
  ρ (v2u · (Vec · A n) -m xss).2 -
  ρ (lengthDistConcat · A -n (v2u-v2l · A -n -m xss)) -
  (l2v · A (concatL · A (v2l-v2l · A -n -m xss))) .
\end{verbatim}

The result of reusing \texttt{concatL} has the following type:

\begin{verbatim}
Vec · A (length (l2v (concatL (v2l-v2l -n -m xss))))
\end{verbatim}

We rewrite by the property
(\texttt{lengthDistConcat} in \reffig{lendistconcat})
that length distributes through the list concatenation of
the nested coercion (performed by \texttt{u2l-l2l},
which takes a list of $n$-length-constrained lists to a
list of lists). The result of this distribution is the product of the
length of the nested list and $n$:

\begin{verbatim}
Vec · A (mult (length xss) n)
\end{verbatim}

Note that the property \texttt{lengthDistConcat} relies on all nested
lists having the same length (\texttt{n}), hence it is defined for a
list of length-constrained lists. Yet, our type resulting from reusing
\texttt{concatL} applies \texttt{length} to a list of
(non-constrained) lists \texttt{(l2v (concatL (v2l-v2l -n -m xss)))},
so why does the rewrite using \texttt{lengthDistConcat} succeed?
The reason is that both \texttt{(concatL · A (u2l-l2l · A -n xss))}
and \texttt{(l2v (concatL (v2l-v2l -n -m xss)))} erase to
\texttt{(concatL xss)}! The former is a consequence of
identity coercion \texttt{u2l-l2l} (\reflem{u2ll2l}),
and the latter is a consequence of identity coercions
\texttt{l2v} (\refthm{l2v}) and \texttt{v2l-v2l} (\refthm{v2lv2l}).

Finally, we rewrite by the length constraint on the length of the
outer input vector (using projection \texttt{(v2u xss).2}),
changing \texttt{(length xss)} to \texttt{m}, resulting
in our goal type.

\begin{remark}
The rich definitional equalities introduced by erasure and identity
coercions make program reuse of \texttt{concatV} in terms of
\texttt{concatL} straightforward, allowing us to easily rewrite
our goal type by \texttt{lengthDistConcat}. Program reuse in a
non-erased setting requires more complex lemmas and rewrites, due to
\texttt{VecL} being defined as a non-erased dependent pair
($\Sigma$-type), rather than an erased dependent intersection
($\iota$-type).
\end{remark}

\begin{theorem}
\erase{\name{concatV}} is \erase{\name{concatL}}:
\labthm{concatv}
\end{theorem}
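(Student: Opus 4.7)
The plan is to follow the same template as the proof of \refthm{appendv}: start from the definition of \name{concatV}, erase all the administrative scaffolding (rewrites and projections) so that only the computational core remains, and then successively collapse each identity coercion until what is left is $\name{concatL}$ applied to its argument, at which point a final $\eta$-contraction closes the goal.

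Concretely, I would first invoke \refasm{eraseimp} so that the implicit abstractions and applications drop away. The body of \name{concatV} then reduces, by the definition of erasure for $\rho$, to simply the innermost term $\name{l2v} \earg (\name{concatL} \earg (\name{v2l\text{-}v2l} \earg xss))$ wrapped in $\fun{xss}$, since the two $\rho$ rewrites carry no computational content (they are the same kind of ``Erase rewrite'' step used throughout the paper). Next I would peel off the outer coercion using \refthm{l2v}, turning $\name{l2v} \earg t$ into $t$, leaving $\fun{xss} \erase{\name{concatL} \earg (\name{v2l\text{-}v2l} \earg xss)}$. A second application of an identity-coercion theorem, this time \refthm{v2lv2l}, collapses the inner coercion to give $\fun{xss} \erase{\name{concatL}} \earg xss$. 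A final $\eta$-contraction yields $\erase{\name{concatL}}$, as required.

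There is no real obstacle: the proof is essentially mechanical once the erasure of the two $\rho$-rewrites is carried out, and each remaining step is a direct appeal to a previously established ``erases to the identity'' theorem. In keeping with the style used for \refthm{appendv}, the cleanest presentation is probably to state the proof as ``Erase rewrites, then same as the proof of \refthm{appendl}, exchanging \name{concatL} for \name{appendL} and \name{v2l\text{-}v2l} (erased by \refthm{v2lv2l}) for \name{l2v} (erased by \refthm{l2v})'', rather than redo the align block.
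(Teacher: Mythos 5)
Your proposal is correct and matches the paper's own proof, which likewise begins with ``Erase rewrites'' and then collapses the two identity coercions via \refthm{v2lv2l} and \refthm{l2v} before a final $\eta$-contraction; the only (immaterial) difference is that you eliminate the outer \name{l2v} before the inner \name{v2l-v2l}, whereas the paper does the reverse.
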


\begin{proof}
{\small
  Erase rewrites, then:
\begin{align*}
  &=_\delta \fun{xss} \erase{
    \name{l2v} \earg (\name{concatL} \earg \Erase{\name{v2l-v2l} \earg xss})
  }
  && \by{By \refthm{v2lv2l}.}
  \\
  &=_\beta \fun{xss} \erase{
    \name{l2v} \earg (\name{concatL} \earg xss)
  }
  && \by{By \refthm{l2v}.}
  \\
  &=_\beta \fun{xss} \erase{\name{concatL}} \earg xss
  && \by{Contract.}
  \\
  &=_\eta \erase{\name{concatL}}
  \eqed
\end{align*}
}
\end{proof}

\subsection{Proof Reuse}
\labsec{nestreuse:proofreuse}

We achieve proof reuse by proving that vector concat distributes
through vector append (\texttt{concatDistAppendV})
in terms of the corresponding proof for lists
(\texttt{concatDistAppendL}). This only requires applying our reused
proof of \texttt{concatDistAppendL} to the result of coercing
our input nested vector arguments to nested lists
(via \texttt{v2l-v2l}):

\begin{verbatim}
concatDistAppendV ◂ ∀ A : ★ .
  ∀ n1 : Nat . ∀ m1 : Nat . Π xss : Vec · (Vec · A n1) m1 .
  ∀ n2 : Nat . ∀ m2 : Nat . Π yss : Vec · (Vec · A n2) m2 .
  appendV (concatV xss) (concatV yss) ≃ concatV (appendV xss yss)
  = Λ A . Λ n1 . Λ m1 . λ xss . Λ n2 . Λ m2 . λ yss .
  concatDistAppendL · A
    (v2l-v2l · A -n1 -m1 xss)
    (v2l-v2l · A -n2 -m2 yss) .
\end{verbatim}

The result of reusing \texttt{concatDistAppendL} has the following type:
\begin{verbatim}
appendL (concatL (v2l-v2l xss)) (concatL (v2l-v2l yss)) ≃
  concatL (appendL (v2l-v2l xss) (v2l-v2l yss))
\end{verbatim}

After erasure, this $\beta$-reduces to our goal because
\texttt{appendL} erases to \texttt{appendV} by
\refthm{appendv},
\texttt{concatL} erases to \texttt{concatV} by
\refthm{concatv},
and \texttt{v2l-v2l} erases to (\texttt{λ x . x})
by \refthm{v2lv2l}.

\section{Related Work}
\labsec{related}

\subsection{Coercible in Haskell}

Breitner et al. describe a GHC extension to Haskell (available
starting with GHC 7.8) for a type class \verb|Coercible a b|, which
allows casting from \verb|a| to \verb|b| when such a cast is indeed
the identity function~\cite{breitner+16}.  The motivation is to support retyping of data
defined using Haskell's \verb|newtype| statement, which is designed to
give programmers the power to erect abstraction barriers that cannot
be crossed outside of the module defining the \verb|newtype|.  Within
such a module, however, \verb|Coercible a b| and associated cast
function \verb|coerce : a -> b| allow programmers to apply zero-cost
casts to change between a \verb|newtype| and its definition.

\verb|Coercible| had to be added as primitive to GHC, along with a
rather complex system of \emph{roles} specifying how coercibility of
application of type constructors follows from coercibility of
arguments to those constructors.  In contrast, in the present work, we
have shown how to derive zero-cost coercions within the existing type
theory of Cedille, with no extensions.  On the other hand, much of the
complexity of \verb|Coercible| in GHC arises from (1) how it
interoperates with programmer-specified abstraction (via
\verb|newtype|) and (2) the need to resolve \verb|Coercible a b| class
constraints automatically, similarly to other class constraints in
Haskell.  The present work does not address either issue.
However, the present work does allow for dependent casts between
indexed variants of datatypes, which \verb|Coercible| does not cover.

\subsection{Ornaments}

Ornaments~\cite{ornaments:original} are used to define refined
version of types (e.g. \texttt{Vec}) from unrefined types
(e.g. \texttt{List}) by ``ornamenting'' the unrefined type with extra
index information. In contrast, our work establishes a relationship
between \texttt{Vec} and \texttt{List} after-the-fact, by defining
identity coercions in both directions for existing types.
By \textit{defining} vectors as natural-number-ornamented lists,
ornaments can be used to calculate the ``patch'' type necessary to adapt a
function from one type to another type~\cite{ornaments:functional}.
For example, ornaments could
calculate that \texttt{lengthDistAppend} is necessary to adapt
\texttt{appendL} from lists to vectors (\texttt{appendV}).

Although ornaments can be used to derive coercions between
types in an ornamental
relationship~\cite{ornaments:original,ornaments:relational},
they will not be identity coercions.
Besides refining the indices of existing datatypes, ornaments
also allow data to be added to existing datatypes. For example, vectors can
be index-refined lists, but lists can also be natural numbers with
elements added. Our work only covers the index refinement aspect of
ornaments.

\subsection{Type Theory in Color}

Type Theory in Color (TTC)~\cite{bernardy:color}
generalizes the concept of erased arguments
of types to various colors, which may be erased optionally and
independently according to modalities in the type theory. In the vector
datatype declaration, the index data can be colored. If a vector is
passed to a function expecting a list (whose modality enforces the
lack of the index data color), then a free non-dependent identity
coercion (using our parlance) is performed.

Lists can also be used as vectors, via a free dependent identity
coercion in the other direction. This works due to a mechanism to
interpret lists as a predicate on natural numbers. The list predicate is
generated as the erasure of its colored elements (like ornaments,
colors can add data in addition to refining indices), which results in
refining lists by the length \textit{function}.

Our work can be used to define a dependent identity coercion
from natural numbers to the datatype of
finite sets (\texttt{Fin}). This is not possible with colors, because
\texttt{Fin} is indexed by successor (\texttt{suc}) in both of its
constructors, which would require generating a predicate on the
natural numbers from a non-deterministic function (or
\textit{relation}). Colors allow identity coercions to be
generated and \textit{implicitly} applied because colors erase \textit{types}, as
well as values, whereas implicit products only erase values
(e.g. $\Lambda$ is erased, but not $\forall$).
Thus, while identity coercions
need to be \textit{explicitly} crafted and applied in our setting, we
are able to \textit{define} identity coercions (like taking natural numbers to
finite sets) for which there is no unique solution.

\section{Conclusion}
\labsec{conc}

We have demonstrated how to achieve zero-cost program and proof reuse
between lists and vectors, which scales to the nested datatype
setting, through the use of \textit{identity coercions}, which erase to
the identity function. Our technique works for datatypes like lists
and vectors, where vectors are the length-indexed version of
lists. Vectors have a subtype relationship with lists, and vice versa,
supporting identity coercion in both directions.

For future work, we
would like to explore what sort of program and proof reuse is possible
(via identity coercions) between types that only have a subtyping
relationship in one direction, such as untyped and intrinsically typed
versions of $\lambda$-calculus expressions.
We would also like to explore integrating a notion of ornaments
into our setting, to automate the generation of the ``patch'' types
necessary for program reuse.
Finally, we would like to generalize our results to a class of
datatypes related by refinement, via a generic encoding of indexed
datatypes.

\bibliographystyle{splncs03}
\bibliography{proof-reuse}

\begin{thebibliography}{10}
\providecommand{\url}[1]{\texttt{#1}}
\providecommand{\urlprefix}{URL }

\bibitem{barras:implicit}
Barras, B., Bernardo, B.: The implicit calculus of constructions as a
  programming language with dependent types. Foundations of Software Science
  and Computational Structures pp. 365--379 (2008)

\bibitem{bernardy:color}
Bernardy, J.P., Guilhem, M.: Type-theory in color. In: Proceedings of the 18th
  ACM SIGPLAN International Conference on Functional Programming. pp. 61--72.
  ICFP '13, ACM, New York, NY, USA (2013)

\bibitem{lang:idris}
Brady, E.: {Idris, a general-purpose dependently typed programming language:
  Design and implementation}. Journal of Functional Programming  23(05),
  552--593 (2013)

\bibitem{breitner+16}
Breitner, J., Eisenberg, R.A., Jones, S.P., Weirich, S.: Safe zero-cost
  coercions for {H}askell. J. Funct. Program.  26,  e15 (2016)

\bibitem{ornaments:functional}
Dagand, P.E., McBride, C.: {Transporting Functions Across Ornaments}. In:
  Proceedings of the 17th ACM SIGPLAN International Conference on Functional
  Programming. pp. 103--114. ICFP '12, ACM, New York, NY, USA (2012)

\bibitem{geuvers01}
Geuvers, H.: {Induction Is Not Derivable in Second Order Dependent Type
  Theory}. In: Typed Lambda Calculi and Applications (TLCA). pp. 166--181
  (2001)

\bibitem{ornaments:relational}
Ko, H.S., Gibbons, J.: Relational algebraic ornaments. In: Proceedings of the
  2013 ACM SIGPLAN workshop on Dependently-typed programming. pp. 37--48. ACM
  (2013)

\bibitem{kopylov03}
Kopylov, A.: Dependent intersection: {A} new way of defining records in type
  theory. In: 18th {IEEE} Symposium on Logic in Computer Science {(LICS)}. pp.
  86--95 (2003)

\bibitem{mcbride00}
McBride, C.: Elimination with a motive. In: International Workshop on Types for
  Proofs and Programs. pp. 197--216. Springer (2000)

\bibitem{ornaments:original}
McBride, C.: Ornamental algebras, algebraic ornaments  (2011)

\bibitem{miquel:implicit}
Miquel, A.: The implicit calculus of constructions extending pure type systems
  with an intersection type binder and subtyping. In: International Conference
  on Typed Lambda Calculi and Applications. pp. 344--359. Springer (2001)

\bibitem{lang:lean}
de~Moura, L., Kong, S., Avigad, J., Van~Doorn, F., von Raumer, J.: The lean
  theorem prover (system description). In: International Conference on
  Automated Deduction. pp. 378--388. Springer (2015)

\bibitem{lang:agda}
Norell, U.: Towards a practical programming language based on dependent type
  theory. Ph.D. thesis, Chalmers University of Technology (2007)

\bibitem{stump17b}
Stump, A.: {From Realizability to Induction via Dependent Intersection} (2017),
  under consideration for Annals of Pure and Applied Logic

\bibitem{lang:coq}
{The Coq Development Team}: {The Coq Proof Assistant Reference Manual} (2008),
  \url{http://coq.inria.fr}

\end{thebibliography}


\end{document}